\newtheorem{prop}{PROPOSITION}[section]
\newtheorem{corol}{COROLLARY}[section]
\newtheorem{constr}{CONSTRUCTION}[section]
\begin{document}
\title{Mutually unbiased bases with free parameters}
\author{Dardo Goyeneche}
\email{dgoyeneche@cefop.udec.cl}
\affiliation{National Quantum Information Center of Gda\'{n}sk,  81-824 Sopot,  Poland}
\affiliation{Faculty of Applied Physics and Mathematics, Technical University of Gda\'{n}sk, 80-233 Gda\'{n}sk, Poland}
\affiliation{Departamento de Fis\'{i}ca, Universidad de Concepci\'{o}n, Casilla 160-C, Concepci\'{o}n, Chile\\Center for Optics and Photonics, Universidad de Concepci\'{o}n, Casilla 4012, Concepci\'{o}n, Chile}
\author{Santiago Gomez}
\affiliation{Departamento de Fis\'{i}ca, Universidad de Concepci\'{o}n, Casilla 160-C, Concepci\'{o}n, Chile\\Center for Optics and Photonics, Universidad de Concepci\'{o}n, Casilla 4012, Concepci\'{o}n, Chile}

\date{\today}

\begin{abstract}
We present a systematic method to introduce free parameters in sets of mutually unbiased bases. In particular, we demonstrate that any set of $m$ real mutually unbiased bases in dimension $N>2$ admits the introduction of $(m-1)N/2$ free parameters which cannot be absorbed by a global unitary operation. As consequence, there are $m=k+1$ mutually unbiased bases in every dimension $N=k^2$ with $k^3/2$ free parameters, where $k$ is even. We construct the maximal set of triplets of mutually unbiased bases for two-qubits systems and triplets, quadruplets and quintuplets of mutually unbiased bases with free parameters for three-qubits systems. Furthermore, we study the richness of the entanglement structure of such bases and we provide the quantum circuits required to implement all these bases with free parameters in the laboratory. Finally, we find the upper bound for the maximal number of real and complex mutually unbiased bases existing in every dimension. This proof is simple, short and it considers basic matrix algebra.
\end{abstract}
\maketitle
Keywords: Mutually Unbiased Bases, Quantum Entanglement, Quantum Circuits.

\section{Introduction}
Mutually unbiased bases (MUB) have an ubiquitous role in quantum mechanics. They are useful to generate quantum key distribution protocols \cite{Bennett,Brub,Cerf}, detection of entanglement \cite{Spengler}, quantum random access codes \cite{CGS08}, dense coding, teleportation, entanglement swapping and covariant cloning (see \cite{Durt} and references therein). Furthermore, a maximal set of MUB allows us to univocally reconstruct quantum states \cite{I81}. On the other hand, entropic certainty \cite{S95,RPZ14} and uncertainty relations \cite{WW10}, have been considered for MUB. Such important applications have motivated an enormous effort to understand the underlying structure behind incomplete \cite{MBGW14,G13} and complete \cite{I81,WF89} sets of MUB. In particular, incomplete sets of MUB have an important role in Bell inequalities \cite{BCPSW14}, uncertainty relations \cite{A04} and locking of classical correlations in quantum states \cite{DHLST04,BW07,DG09}. Despite of the important advance done for complete sets of MUB in prime \cite{I81} and prime power \cite{WF89} dimensions, incomplete sets of MUB seem to be much more challenging. Indeed, the full classification of all possible sets of MUB for 2-qubit systems has been recently done \cite{BWB10} and the multipartite case is poorly understood. The lack of a deeper understanding of mutually unbiased bases seems to be the absence of a suitable mathematical tool. Indeed, for three or more qubits systems it is not known the existence of  quadruplet of MUB having free parameters and a few triplets were accidentally found \cite{B94}. In this work, we enlighten this area of research by presenting a systematic method to introduce free parameters in incomplete sets of MUB. As consequence, we demonstrate that any set of $m$ real MUB existing in any dimension $N$ admits the introduction of free parameters with our method. We also show that our construction is not restricted to the consideration of real bases. Indeed, we illustrate our method by explicitly constructing the maximal set of triplets for 2-qubits systems and several triplets, quadruplets and quintuplets of MUB having free parameters for 3-qubits systems. All of these cases involve complex MUB. Furthermore, we analyze the entanglement structure of such sets of MUB and provide the quantum circuit required to implement all these sets in the laboratory.

This work is organized as follows: In Section II we present a short introduction to mutually unbiased bases, the link to complex Hadamard matrices and we resume the state of the art of MUB with free parameters. In Section III we present our method to introduce free parameters in incomplete sets of MUB. In Section IV we prove that any set of real MUB admits the introduction of the maximal number of parameters allowed by our method. In Section V we construct triplets, quadruplets and quintuplets of MUB having free parameters for three qubit systems and study the entanglement structure of each case. In Section VI we resume our main results, conclude and discuss some open questions. Additionally, we illustrate our method by explicitly solving the simplest case of triplets of MUB for 2-qubits systems (see Apendix \ref{appendix1}). The explicit construction of a quadruplet and a quintuplet of MUB having free parameters for three qubit systems is provided in Appendix \ref{appendix2}. In Appendix \ref{appendix3} we derive the quantum circuit required to generate every quadruplet and quintuplet of MUB presented in this work. Finally, in Appendix \ref{appendix4} we find a simple and short proof for the upper bound of the maximal number of real and complex MUB in every dimension by considering basic matrix algebra.

\section{Mutually unbiased bases and complex Hadamard matrices}

In this section, we present some fundamental properties about \emph{mutually unbiased bases} (MUB) and \emph{complex Hadamard matrices} (CHM) required to understand the rest of the work. For a complete review about MUB and CHM we suggest \cite{DEBZ10} and \cite{TZ06}, respectively. Two orthonormal bases $\{\phi_j\}_{j=0,\dots,N-1}$ and $\{\psi_k\}_{k=0,\dots,N-1}$ defined in $\mathbb{C}^N$ are \emph{mutually unbiased} if
\begin{equation}\label{MUB}
|\langle\phi_j,\psi_k\rangle|^2=\frac{1}{N},
\end{equation}
for every $j,k=0,\dots,N-1$. In general, a set of $m>2$ orthonormal bases are MUB if every pair of bases of the set is MUB. A set of $m$ MUB is called \emph{extensible} if there exist an $m+1$th basis which is mutually unbiased with respect to the rest of the bases. It has been shown that $m=N+1$ MUB exist for $N$ prime \cite{I81} and prime power \cite{WF89}. Additionally, maximal sets of MUB can be constructed in prime power dimensions by considering gaussian sums and finite fields \cite{RBKS05,KRBS09,KSSL09}. For any other dimension $N=p_1^{r_1}p_2^{r_2}\dots p_k^{r_k}$ ($p_1^{r_1}<p_2^{r_2}<\dots< p_k^{r_k}$) the maximal value for $m$ is not known and the lower bound $m\geq p_1^{r_1}$ is provided by the maximal number of fully separable (i.e., tensor product) MUB \cite{KR04}. Additionally, in dimensions of the form $N=k^2$ it is possible to find $m=k+1$ real MUB by considering orthogonal Latin squares \cite{PDB09}. Let us arrange the bases $\{\phi_j\}$ and $\{\psi_k\}$ in columns of unitary matrices $B_1$ and $B_2$, respectively. Thus, if $B_1$ and $B_2$ are MUB we have
\begin{equation}\label{B1B2H}
B_1^{\dag}B_2=H,
\end{equation}
where $H$ is a CHM. An $N\times N$ matrix $H$ is called a \emph{complex Hadamard matrix} (CHM) if it is unitary and all its complex entries have the same amplitude $1/\sqrt{N}$. For example, the Fourier matrix $(F_N)_{jk}=\frac{1}{\sqrt{N}}e^{2\pi ijk/N}$ is a CHM for every $N$, where $i=\sqrt{-1}$. Two CHM $H_1$ and $H_2$ are equivalent if there exists permutation matrices $P_1,P_2$ and diagonal unitary matrices $D_1,D_2$ such that $H_2=P_1D_1H_1D_2P_2$. Therefore, MUB and CHM are close related: any set of $m$ MUB $\mathcal{S}_1=\{B_1,\dots,B_m\}$ is unitary equivalent to a set $\mathcal{S}_2=\{\mathbb{I},H_1,\dots,H_{m-1}\}$, where $\mathbb{I}$ represents the computational basis and $H_1,\dots,H_{m-1}$ are CHM. Indeed, the unitary transformation that connects $\mathcal{S}_1$ with $\mathcal{S}_2$ is $B_1^{\dag}$. That is, 
\begin{eqnarray}\label{MUBCHM}
B_1^{\dag}(\mathcal{S}_1)&=&\{B_1^{\dag}B_1,B_1^{\dag}B_2,\dots,B_1^{\dag}B_m\}\nonumber\\
&=&\{\mathbb{I},H_1,\dots,H_{m-1}\}\nonumber\\
&=&\mathcal{S}_2,
\end{eqnarray}
where we used Eq.(\ref{B1B2H}). Alternatively, $B_k^{\dag}(\mathcal{S}_1)$ also provides an analogous result for $k=2,\dots,m$. The full classification of CHM and MUB has been solved up to dimension $N=5$ (see \cite{H96} and \cite{BWB10}, respectively). For $N=6$ both problems remain open despite a considerable effort made during the last 20 years \cite{Z99,A05,JMM10,G04,BH07,BBELTZ07,BW08,BW10,JMMSW09,G13}. The problems also remain open for any dimension $N>6$. For example, they are open in the prime dimension $N=7$, where a single 1-parametric family of complex Hadamard matrices is known \cite{P97} and a maximal set of 8 MUB is known \cite{I81} but incomplete sets of MUB are not yet characterized. Indeed, it is still open the question whether a triplet of MUB having free parameters exist in dimension $N=7$.

Let us summarize the state of the art about sets of MUB having free parameters. First, any set of $m$ MUB in prime dimension $N=p$ of the form $\{\mathbb{I}, F_p, C_1,\dots,C_{m-2}\}$ is \emph{isolated}, where $F_p$ is the Fourier matrix and $\{C_1,\dots,C_{m-2}\}$ are circulant CHM \cite{H08}.  A complex Hadamard matrix is isolated if there is no family of CHM connected with it \cite{TZ06}. By family we understand a set of inequivalent complex Hadamard matrices depending on some free parameters. We extend the same definition to sets of MUB: a set of $m$ MUB is isolated if there is no family of $m$ MUB connected with it. For example, any set of $m\leq N+1$ MUB in dimension $N=2,3$ and $N=5$ is isolated. In dimension $N=4$ there is a unique 3-parametric triplet of MUB of the form $\{\mathbb{I},F^{(1)}_4(x),H(y,z)\}$ and quadrupets and quintuplets of MUB are isolated \cite{BWB10}. In dimension $N=6$ a 1-parametric triplet of MUB exists \cite{Z99}. Moreover, two-parametric triplets of the form $\{\mathbb{I},F^{(2)}_6(x,y),H(x,y)\}$ exist for any $x,y\in[0,2\pi)$ and seem to be unextendible for any pair $x,y$ \cite{JMM10,G13}. Also, in dimensions $N=9$ \cite{Faugere} and $N=4k$ 1-parametric triplets of MUB can be defined by considering cyclic $n$-roots \cite{B94}.

All the above sets of MUB with free parameters were found by taking advantage of special properties holding in specific dimensions. The existence of quadruplets of MUB having free parameters \emph{is still unknown} in every dimension, as far as we know. In the next section we present a systematic method to introduce free parameters in sets of $m$ MUB in dimension $N$.

\section{MUB with free parameters}\label{SIII}
A set of $r>N$ vectors $\{v_k\}\subset\mathbb{C}^N$ has associated a Gram matrix $G\in\mathbb{C}^{r\times r}$, where $G_{ij}=\langle v_i,v_j\rangle$, $i,j=0,\dots,r-1$ and $\mathrm{Rank(G)}=N$. Reciprocally, from the Gram matrix $G$ of size $r$ and rank $N$ we can always find a set of vectors $\{v^{\prime}_k\}$ such that $G_{ij}=\langle v^{\prime}_i,v^{\prime}_j\rangle$ and $v^{\prime}_k\in\mathbb{C}^N$. The set of vectors $\{v_k\}$ and $\{v^{\prime}_k\}$, associated to the same $G$, are connected by means of a unitary transformation. Thus, they define the same geometrical structure in the complex projective space $\mathbf{CP}^{N-1}$. The vectors $\{v^{\prime}_k\}$ can be found from $G$ by considering the \emph{Cholesky decomposition}, i.e., to find the unique upper triangular matrix $L$ having positive diagonal entries such that $G=L^{\dag}L$. Thus, the $r$ vectors $\{v^{\prime}_k\}$ are given by the $r$ columns of $L$, where we only have to consider entries of the first $N$ rows of $L$ (the rest are null because of the rank restriction). In this work, we are particularly interested to study Gram matrices associated to a set of $m$ MUB $\{\mathbb{I},H_1,H_2,\dots,H_{m-1}\}$ in $\mathbb{C}^N$. That is,
\begin{equation}\label{Gram}
G=\left(
  \begin{array}{ccccc}
    \mathbb{I}     &       H_1       &       H_2       & \dots&  H_{m-1}  \\
    H_1^{\dag}     &  \mathbb{I}     &H_1^{\dag}H_2    & \dots& H_1^{\dag}H_{m-1} \\
    H_2^{\dag}     &   H_2^{\dag}H_1 &    \mathbb{I}   & \dots& H_2^{\dag}H_{m-1}  \\
    \vdots         &    \vdots       &       \vdots    &\ddots&    \vdots \\
    H_{m-1}^{\dag} &H_{m-1}^{\dag}H_1&H_{m-1}^{\dag}H_2&\dots &   \mathbb{I} \\
  \end{array}
\right).
\end{equation}
Note that this matrix naturally defines a structure of $m^2$ square blocks of size $N$, each of them defined by a unitary matrix of the form $H^{\dag}_iH_j$, where $i,j=0,\dots,m-1$ and $H_0=\mathbb{I}$. The Cholesky decomposition of this Gram matrix $G$ is given by
\begin{equation}\label{Choleski}
L=\left(
  \begin{array}{ccccc}
    \mathbb{I}     &       H_1       &       H_2       & \dots&  H_{m-1}  \\
    0_N     &  0_N     &0_N    & \dots& 0_N \\
    \vdots         &    \vdots       &       \vdots    &\ddots&    \vdots \\
   0_N &0_N&0_N&\dots & 0_N \\
  \end{array}
\right),
\end{equation}
where $0_N$ are zero matrices of size $N$. So, the set of $m$ MUB is clearly given by  $\{\mathbb{I},H_1,H_2,\dots,H_{m-1}\}$ which corresponds to the first block of rows of $G$ (see Eq.(\ref{Gram})). This important property \emph{substantially simplifies} our method. In a previous work, we found the most general way to introduce free parameters in pairs of columns (or rows) of any complex Hadamard matrix in every dimension \cite{G132}. A free parameter can be introduced in two columns $C_1$ and $C_2$ of a CHM if and only if $C_1\circ C_2\in\mathbb{R^N}$, which only holds for $N$ even.  Here, the circle denotes the (entrywise) Hadamard product, that is, $(C_1\circ C_2)_j=(C_1)_j(C_2)_j$, $j=0,\dots,N-1$. Pairs of columns (or rows) satisfying this property were called \emph{equivalent to real pairs} (ER pairs). The construction of the CHM having free parameters is very simple:
\begin{constr}[\cite{G132}]\label{construction}
Given an ER pair of columns $\{C_1,C_2\}$ we introduce a free phase $e^{i\alpha}$ in the $j$th entries $(C_1)_j$ and $(C_2)_j$ if $(C_1^*\circ C_2)_j<0$ for $j=0,\dots,N-1$.
\end{constr}
Here, the asterisk denotes complex conjugation. Note that $\sum_{j=0}^{N-1}(C_1^*\circ C_2)_j$ is the inner product between the column vectors $C_1$ and $C_2$, which has to be zero by definition of CHM. Let us exemplify this method by introducing two free parameter in the Fourier matrix
\begin{equation}
F_4=\frac{1}{2}\left(\begin{array}{cccc}
1&1&1&1\\
1&i&-1&-i\\
1&-1&1&-1\\
1&-i&-1&i
\end{array}
\right).
\end{equation}
That is,
\begin{equation}\label{FamF4}
F_4(\alpha,\beta)=\frac{1}{2}\left(\begin{array}{cccc}
1&1&1&1\\
1e^{i\alpha}&ie^{i\beta}&-1e^{i\alpha}&-ie^{i\beta}\\
1&-1&1&-1\\
1e^{i\alpha}&-ie^{i\beta}&-1e^{i\alpha}&ie^{i\beta}
\end{array}
\right).
\end{equation}
Here, we considered the ER pairs of columns $\{C_1,C_3\}$ and $\{C_2,C_4\}$ to introduce the paramaters $\alpha$ and $\beta$, respectively. Note that $\alpha$ and $\beta$ are \emph{aligned} in the same rows. A set of $N/2$ ER pairs producing aligned free parameters in matrices of size $N$ are called \emph{aligned ER pairs}. The remarkable property of aligned ER pairs is that one of the parameters is always linearly dependent (if we consider the equivalence of CHM defined above).  Thus, the parameter $\beta$ is linearly dependent and the Fourier family has only one relevant parameter \cite{TZ06}.

Let us now extend this method to the construction of MUB with free parameters. Here, the key ingredient is the generalization of the concept of ER pairs: a set of two columns $\{C_1,C_2\}$ of a Gram matrix of $m$ MUB in dimension $N$ is called a \emph{generalized ER} pair (GER) if $C_1\circ C_2\in\mathbb{R}^{mN}$. The following result, natural generalization of Construction \ref{construction}, is the main result of this work:
\begin{prop}\label{mainprop}
Let $G$ be the Gram matrix of a set of $m$ MUB in dimension $N$ and suppose that it has $\mathcal{N}$ GER pair of columns, where both vectors of each GER pair belong to the same block of columns. Then, the set of MUB admits the introduction of $\mathcal{N}$ free parameters.
\end{prop}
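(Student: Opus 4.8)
The plan is to reduce Proposition \ref{mainprop} to Construction \ref{construction} applied blockwise, using the Cholesky structure of Eq.~(\ref{Choleski}) to transfer the deformation of the Gram matrix back to a deformation of the bases themselves. First I would fix one GER pair, say columns with labels $(a,p)$ and $(b,p)$ sitting in the same $p$th block of columns of $G$ (here $a,b$ index entries within a block and $p$ the block). By hypothesis the Hadamard product of these two length-$mN$ columns is real. The key observation is that a column of $G$ in the $p$th block has the form obtained by stacking the $a$th columns of $H_i^{\dag}H_p$ over $i=0,\dots,m-1$; in particular its top block ($i=0$) is exactly the $a$th column of $H_p$. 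So a GER pair of columns of $G$ restricts, in the top block, to an ordinary ER pair of columns of the CHM $H_p$, and moreover the reality condition $C_1\circ C_2\in\mathbb{R}^{mN}$ guarantees that the sign pattern $\mathrm{sgn}((C_1^*\circ C_2)_j)$ is consistent across all $m$ sub-blocks.

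Next I would apply Construction \ref{construction}: introduce a phase $e^{i\alpha}$ in those rows $j$ of the pair where $(C_1^*\circ C_2)_j<0$. I would do this simultaneously in every $N\times N$ block of $G$ along the chosen two columns, and — to keep $G$ Hermitian and of the right block form — apply the conjugate phase $e^{-i\alpha}$ in the corresponding two rows. The content of Construction \ref{construction}, proved in \cite{G132}, is precisely that this operation sends each unitary block $H_i^{\dag}H_p$ to another unitary with all entries of modulus $1/\sqrt N$, i.e.\ to another CHM, because the only inner products affected are those between the two columns of the ER pair, and the real-sign bookkeeping leaves them zero. Hence the deformed matrix $G(\alpha)$ is still a positive semidefinite Hermitian matrix of rank $N$ whose blocks are $\mathbb{I}$ on the diagonal and CHM off the diagonal.

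Then I would invoke the Cholesky/Gram correspondence recalled at the start of Section \ref{SIII}: from $G(\alpha)$ one recovers vectors $\{v'_k(\alpha)\}\subset\mathbb{C}^N$, and by the explicit form of the Cholesky factor in Eq.~(\ref{Choleski}) the first block of rows reads $\{\mathbb{I},H_1(\alpha),\dots,H_{m-1}(\alpha)\}$ with each $H_i(\alpha)$ a CHM. Since all pairwise blocks $H_i(\alpha)^{\dagger}H_j(\alpha)$ of $G(\alpha)$ are CHM, Eq.~(\ref{B1B2H}) shows this is again a set of $m$ MUB. Repeating for each of the $\mathcal{N}$ GER pairs — which live in possibly different column blocks but, being columns, never interfere with one another's reality conditions — introduces $\mathcal{N}$ independent phases $\alpha_1,\dots,\alpha_{\mathcal{N}}$, giving the claimed $\mathcal{N}$-parameter family of MUB. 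I would close by noting that genuine freeness (non-absorption by a global unitary) is not asserted in this Proposition itself — it will be the subject of the real-MUB analysis in Section IV — so here it suffices to exhibit the family.

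The main obstacle I anticipate is the bookkeeping that the phase insertion is globally consistent: one must check that inserting $e^{i\alpha}$ in the top block on rows $j\in J:=\{j:(C_1^*\circ C_2)_j<0\}$ forces the \emph{same} row set $J$ (relative to each block) to work for every other block $H_i^{\dagger}H_p$, and that simultaneously the $e^{-i\alpha}$ insertions on the two transposed rows do not spoil unitarity of the blocks lying in those row-blocks. This is exactly where the strong hypothesis $C_1\circ C_2\in\mathbb{R}^{mN}$ (rather than merely block-by-block reality) does the work, and where one must be careful that a column GER pair — as opposed to a row pair — only ever modifies within-block-of-columns inner products, so the off-pair orthogonality relations are untouched. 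Everything else is a direct citation of Construction \ref{construction} and the Cholesky remark.
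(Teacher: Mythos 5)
Your overall route is the same as the paper's: apply Construction \ref{construction} blockwise to the two columns of a GER pair, restore hermiticity by conjugate phases on the corresponding pair of rows, and read the deformed bases off the first $N$ rows via the Cholesky form (\ref{Choleski}). The genuine gap is the step where you conclude that the deformed matrix $G(\alpha)$ is ``still a positive semidefinite Hermitian matrix of rank $N$.'' You infer this from the fact that all off-diagonal blocks remain complex Hadamard matrices, but for $m\geq 3$ that is far too weak: a Hermitian matrix with identity diagonal blocks and CHM off-diagonal blocks is generically \emph{not} positive semidefinite of rank $N$ (if it were, any three CHM $H_1,H_2,K$ would force $K=H_1^{\dag}H_2$ via Cholesky, which is absurd). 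Positivity together with $\mathrm{Rank}=N$ is precisely the nontrivial consistency condition that makes $G(\alpha)$ a Gram matrix at all, and hence the only thing that entitles you to the Cholesky step and to the conclusion that the phased first block of rows is again a set of $m$ MUB. The paper closes exactly this point by a different observation: the column-phase and row-phase insertions are mutually adjoint, so the deformation acts as $G(\alpha)=U(\alpha)GU^{\dag}(\alpha)$ and $G(\alpha)$ has the same eigenvalues as $G$; it is this spectral invariance, not the block structure, that guarantees $G(\alpha)\geq 0$ with rank $N$. Your proposal never addresses the spectrum, so as written it does not establish that the deformed bases are mutually unbiased.

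Two further points. The ``obstacle'' you single out --- that the set $J$ of phased rows must be the same relative to every block --- is neither needed nor true in general: Construction \ref{construction} is applied in each block at whatever positions carry negative products there, and only the blockwise reality guaranteed by $C_1\circ C_2\in\mathbb{R}^{mN}$ is required (this is also why the parameters need not be ``aligned''). Conversely, your claim that the $\mathcal{N}$ GER pairs ``never interfere with one another's reality conditions'' because they are column pairs is too quick: the hermiticity-restoring row phases attached to one pair run through the column blocks hosting the other pairs, so the $\mathcal{N}$ parameters should be introduced simultaneously with positions determined by the original $G$ (as in the paper and its appendices), not justified as independent sequential deformations.
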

\begin{proof}
For simplicity let us first consider the case of 3 MUB ($m=3$) in dimension $N$, where the Gram matrix is given by
\begin{equation}\label{Gram2}
G=\left(
  \begin{array}{ccc}
    \mathbb{I}     &       \mathbf{H_1}       &       H_2\\
    H_1^{\dag}     &  \mathbf{I}     &H_1^{\dag}H_2\\
    H_2^{\dag}     &   \mathbf{H_2^{\dag}H_1} &    \mathbb{I}
  \end{array}
\right),
\end{equation}
and suppose that $G$ has a GER pair of columns $\{C_i,C_j\}$ in the same block of columns (i.e., $\mathrm{Int}[i/N]=\mathrm{Int}[j/N]$, where Int means integer part). Therefore, $N$ of the products $(C_i)_k(C_j)_k$ are zero because of the corresponding identity block $\mathbb{I}$ and only $2N$ values of these products play a role in $C_i\circ C_j$. Thus, a free parameter can be introduced in both columns $C_i$ and $C_j$ by applying Construction \ref{construction} to the $2N$ dimensional subvectors of $C_i$ and $C_j$ having $2N$ non-zero entries. Note that after introducing the parameter the hermiticity of $G$ is destroyed. In order to restore it we have to apply the same method to the GER pair of rows $\{R_i,R_j\}$ which always exists because of the hermiticity of $G$. This lead us to a 1-parametric set of matrices satisfying  \emph{i)} $G(\alpha)=U(\alpha)GU^{\dag}(\alpha)$ and \emph{ii)} $|G(\alpha)_{ij}|=|G_{ij}|$ for any $\alpha\in[0,2\pi)$. Note that \emph{i)} holds for any $U$ whereas \emph{ii)} is strongly dependent on our construction. Here, $U(\alpha)$ and $U^{\dag}(\alpha)$ represent the introduction of a free parameter in columns two and rows, respectively. Furthermore, $G(\alpha)$ and $G(0)$ have the same eigenvalues for any $\alpha\in[0,2\pi)$ so $G(\alpha)$ is a 1-parametric set of Gram matrices defining a 1-parametric set of $m$ MUB in dimension $N$. If $G$ has $\mathcal{N}$ GER pairs then we can introduce $\mathcal{N}$ free parameters in the same way. The generalization to any $m>3$ is straightforward from the above explanation.
\end{proof}
Let us emphasize the importance of considering both vectors of a GER pair in the same block of columns: suppose that we choose a GER pair formed by columns of different blocks (e.g. the 5th and 9th columns of the Gram matrix given in the example Eq.(\ref{G12})) and we introduce a free parameter. Despite of this action generates a genuine Gram matrix the set of 3 bases would be \emph{not} composed by MUB with free parameters. This is simple to understand because of the free parameter would appear in a single vector of the second and third basis.

In the particular case of $m=2$ our Proposition \ref{mainprop} is reduced to Construction \ref{construction}, which has been derived in a previous work \cite{G132}. That is, to introduce free parameters in a pair of MUB is equivalent to introduce free parameters in a CHM, as suggested by Eq.(\ref{MUBCHM}). It worths to mention that GER pairs of columns belonging to the first block of $G$ (i.e., $\{C_i,C_j\}$ with $i,j<N$) \emph{always} produce parameters that can be absorbed in global unitary transformations. Roughly speaking, in this case the parameters do not appear in the inner product of vectors of two different bases (see Eq.(\ref{Gram})). The explicit construction of the maximal set of triplets of MUB for 2-qubits systems is given in Appendix \ref{appendix1}. Also, a triplet, a quadruplet and a quintuplet of MUB having free parameters for 3-qubits systems are given in Appendix \ref{appendix2}. We encourage to the reader to have a close look to the examples in order to clearly understand our method.
 
\section{Families stemming from real MUB}
As we have shown, Proposition \ref{mainprop} allows us to introduce free parameters in Gram matrices of MUB having GER pairs. In this section, we demonstrate that every set of $m$ of real MUB in dimension $N>2$ allows the introduction of the maximal  number of parameters allowed by GER pairs:
\begin{prop}
Any set of $m$ real MUB in dimension $N>2$ admit the introduction of  $Nm/2$ free parameters. Furthermore, $(m-1)N/2$ of these parameters cannot be absorbed by global unitary transformations and, at most, one of them is linearly dependent.
\end{prop}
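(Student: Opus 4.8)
\emph{Proof plan.} The idea is to put the family into the canonical form $\{\mathbb I,H_1,\dots,H_{m-1}\}$, count the parameters produced by Proposition~\ref{mainprop}, and then use a monomial invariant of complex Hadamard matrices to separate the parameters that survive a global unitary from the rest. First I would normalise and count. Since the $m$ bases are real, the transformation $B_1^{\dag}$ of Eq.(\ref{MUBCHM}) is real orthogonal and yields $\{\mathbb I,H_1,\dots,H_{m-1}\}$ with every $H_k=B_1^{\dag}B_{k+1}$ a \emph{real} Hadamard matrix; since such a matrix exists only for $N\equiv 0\pmod 4$ (for $m\ge 2$), $N/2$ is an integer and $N\ge 4$. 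The Gram matrix $G$ of Eq.(\ref{Gram}) is then entirely real, so $C_i\circ C_j\in\mathbb{R}^{mN}$ for \emph{every} pair of columns; in particular every pair of columns inside a common block is a GER pair. Splitting each of the $m$ blocks of $N$ columns into $N/2$ disjoint pairs produces $mN/2$ GER pairs, each with both vectors in one block, so Proposition~\ref{mainprop} introduces $mN/2$ free parameters; and this is the largest number a GER construction can produce, since an $mN\times mN$ matrix has at most $mN/2$ disjoint pairs of columns. This settles the first assertion.

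Next I would isolate the non-absorbable parameters. The $N/2$ parameters attached to pairs in the first (identity) block are absorbable by a global unitary, as already noted in Section~\ref{SIII}. Take instead a parameter $\alpha$ attached to a pair of columns $i',j'$ of a block $b\ge 1$. Inspecting Eq.(\ref{Gram}), introducing $\alpha$ affects only column-block $b$ and, after the hermiticity-restoring step of Proposition~\ref{mainprop}, row-block $b$; in the first block-row the only block that changes is $H_b$, so the modified set of MUB is $\{\mathbb I,\dots,H_b(\alpha),\dots\}$ in which only basis $b$ moves, with $H_b(\alpha)$ equal to $H_b$ with its columns $i',j'$ multiplied by $e^{i\alpha}$ on exactly the $N/2$ rows $r$ with $(H_b)_{ri'}(H_b)_{rj'}<0$. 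For any complex Hadamard matrix the ratios $\Lambda_{abcd}(H)=H_{ab}H_{cd}/(H_{ad}H_{cb})$, $a\ne c$, $b\ne d$, form a multiset invariant under $H\mapsto P_1D_1HD_2P_2$, and the family of such multisets over all ordered pairs of bases is invariant under the equivalence of sets of MUB (global unitary, reordering of the bases, per-basis reorderings and column phase changes). For the real matrix $H_b$ every such ratio is $\pm 1$, whereas $\Lambda_{r,i',r',k'}(H_b(\alpha))=\pm e^{i\alpha}$ as soon as $r$ lies in the phased row set, $r'$ does not, and $k'\notin\{i',j'\}$ --- such $r,r',k'$ exist because $N\ge 4$ and columns $i',j'$ are orthogonal. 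Hence for every $\alpha\notin\{0,\pi\}$ the invariant is altered, the modified set is inequivalent to the original one, and $\alpha$ cannot be absorbed by a global unitary. Running this over the $(m-1)N/2$ pairs lying in blocks $1,\dots,m-1$ gives the stated count of non-absorbable parameters.

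Finally I would bound the linear dependence, and this is the delicate step. The only continuous freedom in the MUB equivalence that can absorb a combination of these parameters is a single global diagonal unitary $D$ (the phase part of the permutation-times-phase matrix needed to keep $\mathbb I$ fixed) together with the per-basis column rephasings; a column rephasing multiplies whole columns and so cannot cancel a phase supported on a proper set of rows, while $D$ can cancel the \emph{aligned} combination of a block only if $D$ coincides with the indicator phase of that block's affected rows. Since $D$ is one fixed matrix that must serve every basis simultaneously, and different real Hadamard matrices $H_b$ have different affected-row patterns, $D$ can perform this cancellation for at most one block; hence at most one combination of the $(m-1)N/2$ parameters is removable, i.e. at most one of them is linearly dependent, the remaining ones being separated by the invariants of the previous paragraph. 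The main obstacle of the whole argument is precisely this last bookkeeping: one must check that the joint action of the global unitary, the basis permutations and the per-basis monomial freedoms on the phase pattern of $G$ annihilates no more than a one-dimensional subspace of the parameter space, the $\Lambda$-invariants being the device that certifies the effectiveness of everything that is left.
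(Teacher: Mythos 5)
Your counting step coincides with the paper's proof: for real bases every pair of columns inside a block of the Gram matrix is a GER pair, so splitting each of the $m$ blocks into $N/2$ disjoint pairs and invoking Proposition~\ref{mainprop} gives the $Nm/2$ parameters (the observation that $4\mid N$ is a nice extra). Where you diverge is in how you substantiate the non-absorbability clause. Your Haagerup-type ratios $\Lambda_{abcd}$ do the job and in fact prove something stronger than asked, namely inequivalence under the full monomial/relabelling group; but it is heavier machinery than the proposition needs. The paper's (implicit) route, sketched in Section~\ref{SIII} and in the proof of Proposition~\ref{mainprop}, is one line: a global unitary $B_i\mapsto UB_i$ leaves the Gram matrix of Eq.~(\ref{Gram}) entrywise invariant, and the phases attached to pairs in blocks $1,\dots,m-1$ appear explicitly in the off-diagonal blocks $H_i^{\dag}H_j$, i.e.\ in inner products between vectors of different bases, so they cannot be absorbed; only the $N/2$ parameters of the identity block are absorbable. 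Both arguments are correct; yours buys inequivalence under a larger group, the paper's buys brevity.

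The weak point is your treatment of the clause ``at most one of them is linearly dependent.'' Your key premise, that different real Hadamard blocks $H_b$ have different affected-row patterns, is unjustified and is false precisely in the relevant aligned situations: in the paper's own $N=4$ example (Appendix~\ref{appendix1}) both blocks are phased on rows $3$ and $4$, and that coincidence is exactly why one parameter is removable there. Moreover ``$D$ can perform the cancellation for at most one block'' does not by itself yield ``at most one removable linear combination'': when the patterns coincide, a single $D$ shifts the aligned parameters of \emph{all} blocks simultaneously, and the correct statement is that the admissible left-diagonal unitaries (after noting that any global unitary preserving the standardized form $\{\mathbb{I},H_1,\dots\}$ must be monomial, and modding out the per-basis column rephasings) form at most a one-parameter group, hence can annihilate at most a one-dimensional subspace of the parameter space. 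You flag this bookkeeping as unfinished yourself, and in fairness the paper dismisses it as ``straightforward,'' but as written your justification of this clause does not go through; it needs the stabilizer analysis just described rather than the distinct-pattern claim.
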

\begin{proof}
Every pair of columns belonging to the same block is clearly a GER pair. Therefore, there are $Nm/2$ GER pairs allowing the introduction of $Nm/2$ free parameters. The rest of the proof is straghtforward (already explained in the proof of Prop. \ref{mainprop}).
\end{proof}
Furthermore, note that there are many different ways to define the GER pairs and so many different families of MUB can be constructed. Precisely, there are $\binom{N}{2}$ different ways to define GER pairs in each of the $m-1$ blocks of columns (the first block only provides unitary equivalent MUB). That is, a total of $(m-1)N(N-1)/2$ different ways. We do not know how many ways are inequivalent for $N>4$. As we mentioned before, in dimensions $N=k^2$ it is possible to construct $m=k+1$ real MUB. By combining this result with the above proposition we have the following result:
\begin{corol}
In every dimension $N=k^2$ there exists $m=k+1$ MUB admitting $k^3/2$ free parameters.
\end{corol}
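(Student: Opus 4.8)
The plan is to obtain this statement as an immediate corollary of the preceding proposition together with the known existence of real MUB in square dimensions. First I would recall from \cite{PDB09} that whenever $N=k^2$ one can build $m=k+1$ real MUB out of a complete set of mutually orthogonal Latin squares of order $k$. Since the corollary is phrased in terms of the integer $k^3/2$, I would restrict attention to $k$ even, which is precisely what guarantees that $N=k^2$ is even --- a prerequisite for Construction~\ref{construction} and hence for Proposition~\ref{mainprop} --- and it also ensures $N=k^2\geq 4>2$, so the hypothesis $N>2$ of the proposition of this section is satisfied.

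Second, I would simply feed this family into that proposition. For a real set of MUB every pair of columns lying inside the same block of the Gram matrix in Eq.(\ref{Gram}) is automatically a GER pair, because the entrywise product of two real vectors is real; hence there are $Nm/2$ GER pairs and the proposition produces $Nm/2=k^2(k+1)/2$ free parameters. Among these, the ones stemming from the $m-1$ blocks of columns other than the first cannot be absorbed into a global unitary transformation, and substituting $m=k+1$ and $N=k^2$ gives $(m-1)N/2=k\cdot k^2/2=k^3/2$, which is exactly the claimed count. The caveat that at most one of these parameters is linearly dependent carries over verbatim from the preceding proposition.

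I do not expect a genuine obstacle here: all the substantive work is already contained in the cited Latin-square construction and in the proposition just proved, so the argument is a one-line substitution $N=k^2$, $m=k+1$. The only points worth stating explicitly are that $k$ even forces $N$ even (otherwise no ER or GER pair exists at all, since the columns of a block cannot be paired), that $N=k^2>2$, and that ``real MUB'' makes every same-block column pair a GER pair --- each of which is immediate.
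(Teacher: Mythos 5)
Your proposal is correct and follows the paper's own route: the corollary is obtained exactly by feeding the $k+1$ real MUB of \cite{PDB09} in dimension $N=k^2$ into the preceding proposition and substituting $m=k+1$, $N=k^2$ to get $(m-1)N/2=k^3/2$ non-absorbable parameters. Your explicit remarks that $k$ must be even (so that $N$ is even and GER pairs exist, matching the abstract's caveat) and that every same-block column pair of a real Gram matrix is a GER pair are precisely the points the paper relies on implicitly.
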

Here, we consider $k>1$ and thus $kN$ parameters cannot be absorbed by a global unitary transformation. In dimension $N=4$ there exists $m=3$ real MUB and thus we can introduce $k^3/2=4$ free independent parameters, where the GER pairs are aligned and thus one of the four parameters is linearly dependent (see Appendix \ref{appendix1}). Here, the 12 possible ways to introduce free parameters produce equivalent sets \cite{BWB10}. Such triplets are also equivalent to the solution found in Appendix \ref{appendix1}. Sets of real MUB are not the only cases where a maximal number of parameters can be introduced with our method. Indeed, in the next section we construct sets of MUB with free parameters in the 3-qubit space by considering complex MUB.

\section{MUB for the 3-qubit space}\label{MUB3Q}
In dimension $N=2^3$ there is a maximal number of 9 MUB. Indeed, four maximal sets having a different entanglement structure have been constructed \cite{RBKS05}:
\begin{eqnarray}\label{MUBAndrei}
\mathcal{S}_1&=&(2,3,4)\hspace{0.5cm}\mathcal{S}_2=(1,6,2)\nonumber\\
\mathcal{S}_3&=&(0,9,0)\hspace{0.5cm}\mathcal{S}_4=(3,0,6)
\end{eqnarray}
where the first, second and third entries denote the number of fully separable, biseparable and maximally entangled bases, respectively. These sets of MUB are not equivalent under Clifford operations but they are equivalent under general unitary transformations. In this section we focus on the construction of pairs, triplets, quadruplets and quintuplets of MUB having free parameters and stemming from elements of $\mathcal{S}_4$. We have chosen this particular set of MUB because it contains the highest number of maximally entangled bases and, consequently, it has potentially important applications in quantum information theory.
\begin{center}
\label{Tabla1}
\begin{table}
\begin{tabular}{c|c|c}

   \#MUB(m) & \# Param. & Example \\
  \hline
  2 & 3 & $\{\mathbb{I},H_1\}^*$\\
  2 & 4 & $\{\mathbb{I},H_1\}$\\
  3 & 7 & $\{\mathbb{I},H_1,H_2\}^*$ \\
  3 & 8 & $\{\mathbb{I},H_1,H_2\}$ \\
  4 & 4 & $\{\mathbb{I},H_1,H_2,H_3\}$ \\
  5 & 0 & $\{\mathbb{I},H_1,H_2,H_3,H_4\}$ \\
  5 & 4 & $\{\mathbb{I},H_1,H_2,H_3,H_5\}^*$ \\
  6-9 & 0 & $\{\mathbb{I},H_i,H_j,H_k,H_l,H_m\}$
\end{tabular}
\caption{MUB with free parameters in dimension $N=8$. The asterisk means that GER pairs are aligned, which produces one linearly dependent parameter. Note that a set of MUB in general allows many different choices for GER pairs, and some of them produce non-alligned GER pairs.  In Appendix \ref{appendix2} we detailedly explain all possible sets of $m$ MUB having free parameters.}
\end{table}
\end{center}
From considering Prop. \ref{mainprop} we find the following results for every subset of $m\leq 9$ MUB of $\mathcal{S}_4$:
\begin{itemize}
\item [\emph{(i)}] Every pair of MUB $\{\mathbb{I},H_i\}\subset\mathcal{S}_4$ admits the introduction of 4 free parameters, where $i=1,\dots,8$. 
\item [\emph{(ii)}] Every triplet of MUB $\{\mathbb{I},H_i,H_j\}\subset\mathcal{S}_4$ admits the introduction of 8 free parameters for every $i\neq j=1,\dots8$.
\item [\emph{(iii)}] Some quadruplets of MUB $\{\mathbb{I},H_j,H_k,H_l\}\subset\mathcal{S}_4$ admit the introduction of 4 free parameters in \emph{only one} of the bases, whereas the rest of the quadruplets do not admit free parameters.
\item [\emph{(iv)}] Every quadruplet admitting 4 free parameters can be extended to a quintuplet of MUB having 4 free parameters. The extension of quadruplets to quintuplets is not unique.
\item [\emph{(v)}] Every set of $6\leq m\leq9$ MUB do not admit free parameters.
\end{itemize}
The maximal number of parameters that can be introduced for every $m$ is provided in Table \ref{Tabla2} of Appendix \ref{appendix2}, where we present the proof of the above results.

The free parameters of all quadruplets and quintuplets of MUB described in \emph{(iii)} and \emph{(iv)} can be generated in the laboratory by considering 7 different quantum circuits (56 cases, 8 cases per circuit; see Appendix \ref{appendix3}), which involve local and Toffoli gates. The generation of the fixed set of bases, i.e. the set $\mathcal{S}_4$, requires different a circuit \cite{SSL14} which involves local, non-local controlled-phase and Toffoli gates \cite{B95}. Therefore, the sets of MUB with free parameters are generated by a composition of two different quantum circuits. The explicit expression of the quantum circuits is provided in Appendix \ref{appendix3}. 

The entanglement structure of the sets of MUB presented in Appendix \ref{appendix3} is very interesting. For example, let us consider the quintuplet $\{\mathbb{I},H_1(\alpha),H_2,H_3,H_5\}\subset\mathcal{S}_4$ (explicitly constructed in Appendix \ref{appendix2}). For simplicity, let us assume that the 4 parameters $\alpha_1$ to $\alpha_4$ are identical ($\alpha$). Here, $H_1(0)$ is a maximally entangled basis, in the sense that every vector of the basis is equivalent (i.e., up to local unitary operations) to the GHZ state $|GHZ\rangle=(|000\rangle+|111\rangle)/\sqrt{2}$. On the other hand, $H_1(\pi/2)$ is a biseparable basis. Indeed, every vector of the basis is equivalent to $|\phi\rangle=|0\rangle(|00\rangle+|11\rangle)/\sqrt{2}$. That is, Alice is separated and Bob and Charlie share a maximally entangled Bell state. For any $0<\alpha<\pi/2$ we have an intermediate amount of entanglement between Alice and Bob-Charlie whereas Bob and Charlie are as entangled as possible for any $\alpha$. That is, two of the three parties (Bob and Charlie) saturate the maximal amount of entanglement allowed by the monogamy of entanglement of three-partite systems. Indeed, for any value of the parameter $\alpha$ the single qubit reductions $\rho_{B}$ and $\rho_{C}$ are maximally mixed. \emph{It is highly non-trivial the fact that the basis $H_1(0)$ (maximally entangled) and $H_1(\pi/2)$ (fully separable) can be continuosly connected without loosing the unbiasity of the quintuplet of MUB for any value $\alpha\in[0,\pi/2)$}. In Appendix \ref{appendix3} we show that, in this case, the parameter $\alpha$ is fully controlled by local unitary operations generated by Bob (see quantum circuit G).  This means that Bob has full control on the entanglement existing between Alice and Bob-Charlie when we are restricted to \emph{keep the unbiasity} of the quintuplet.  Table \ref{Tabla3} in Appendix \ref{appendix3} shows all possible entanglement structures that can be find from quadruplets and quintuplets with free parameters arising from $\mathcal{S}_4$. The purity of the reductions $\rho_A$, $\rho_B$ and $\rho_C$ as a function of the free parameter $\alpha$ for the above quintuplet is depicted in Fig.(\ref{Fig1}).

\begin{figure}
\begin{center}
\scalebox{0.2}{\includegraphics{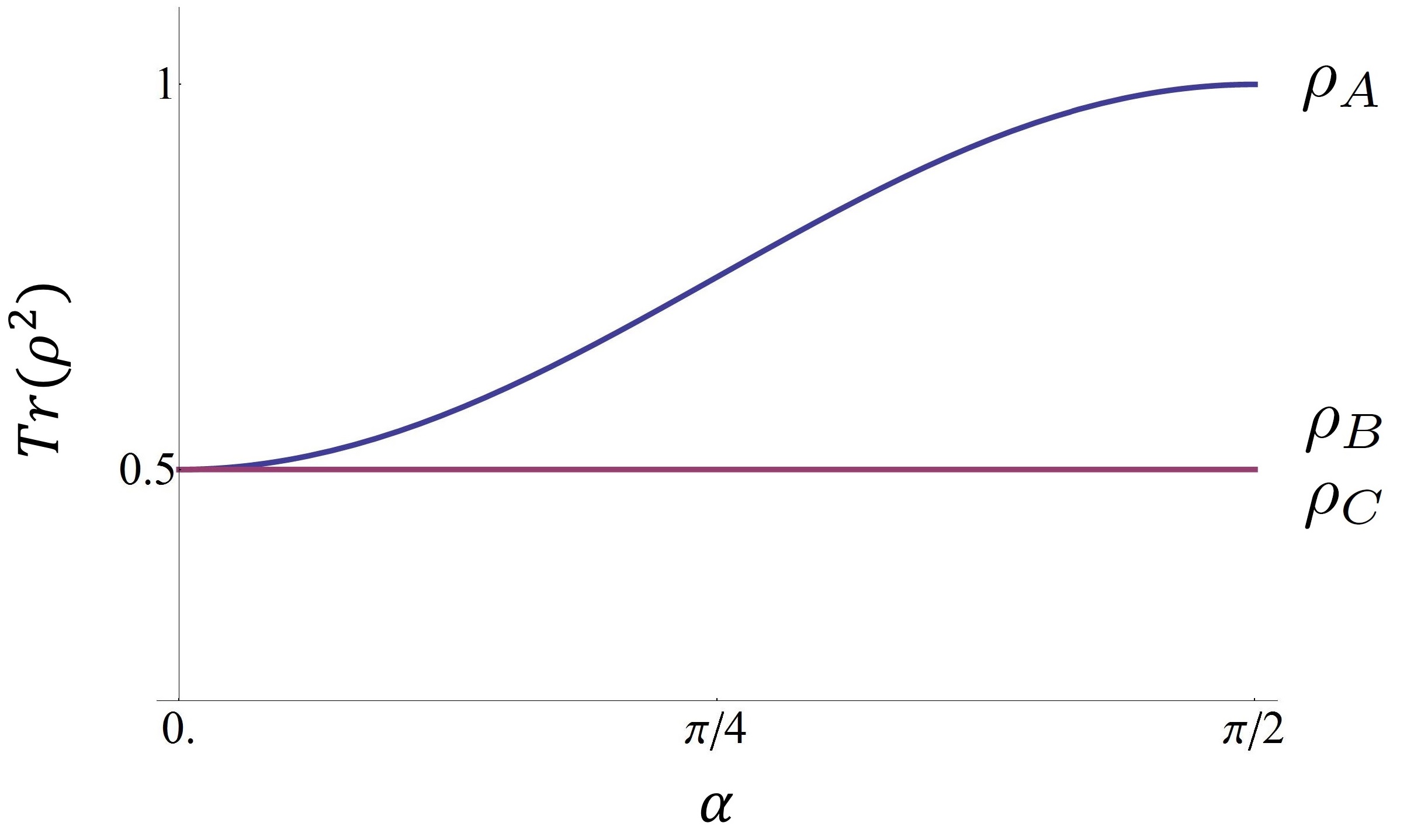}}
\end{center}
\caption{Purity of the reductions to Alice ($\rho_A$), Bob ($\rho_B$) and Charlie ($\rho_C$) for all the states of the basis $H_1(\alpha)$. As we can see, Bob and Charlie are as entangled as they can for every $\alpha$ whereas Alice is maximally entangled with Bob-Charlie for $\alpha=0$ (GHZ state) and separated for $\alpha=\pi/2$, where Bob and Charlie share a Bell state.}
\label{Fig1}
\end{figure}

\section{Summary, conclusion and discussion}
We have presented a systematic way to introduce free parameters in sets of $m$ mutually unbiased bases in dimension $N$ (see Proposition \ref{mainprop}). In particular, for $m=2$ our method is reduced to introduce free parameters in complex Hadamard matrices (see Construction \ref{construction} and also our previous work \cite{G132}). We proved that any set of $m$ real mutually unbiased bases existing in any dimension $N>2$ admit the introduction of free parameters. Furthermore, in every dimension $N=k^2$ there are $k+1$ mutually unbiased with $kN/2$ free parameters, where $k$ is even. We have found the maximal set of triplets of mutually unbiased bases with free parameters for two qubit-systems (see Appendix \ref{appendix1}). Also, we constructed pairs, triplets, quadruplets and quintuples of mutually unbiased bases having free parameters for 3-qubit systems (see Appendix \ref{appendix2}). Such sets are constructed from subsets of the maximal set of complex MUB $\mathcal{S}_4$  (see Eq.(\ref{MUBAndrei})). That is, our construction is not restricted to sets of real MUB. Additionally, we provided the complete set of quantum circuits required to implement all such quadruplets and quintuplets (see Appendix \ref{appendix3}). Finally, we presented a new proof for the upper bound of the maximal number of real and complex mutually unbiased bases existing in every dimension. This short and simple proof only involves basic algebra (see Appendix \ref{appendix4}).

The analysis provided in Section \ref{MUB3Q} for 3-qubits can be easily extended to a higher number of qubits. In order to do this we have to consider the construction of maximal sets of mutually unbiased arising from Galois fields \cite{ RBKS05}. Such construction is a generalization of the set $\mathcal{S}_4$, where every basis is a real Hadamard matrix multiplied by a diagonal unitary matrix containing $4th$ roots of the unity. In such cases we can always define GER pairs (see Prop. \ref{mainprop}) and, therefore, introduce free parameters in subsets of $m$ MUB. Finally, let us present some open issues: (\emph{i}) Find the subset of triplets, quadruplets and quintuplets of MUB considered in Appendix \ref{appendix3} such that they are extendible to 9 MUB, and (\emph{ii}) Is it possible to construct maximal sets of MUB with free parameters in some dimension? This question seems to have a negative answer for every $N$. However, a formal proof is only known for every $N\leq5$ \cite{BWB10}.

\section{Acknowledgements}
We thank to Luis Sanchez Soto and Markus Grassl for fruitfull discussions and Joel Tropp for his comments concerning rank inequalities and Hadamard product. We also thank to the Max Planck Institute for the Science of Light, where this work was partially done. DG is also thankful to Pawe{\l} Horodecki for the hospitality during his stay in Sopot, where this project was finished. This work was supported by FONDECyT Scholarship Nro 3120066 and PIA-CONICYT PFB0824 (Chile) and the ERC Advanced Grant  QOLAPS coordinated by Ryszard Horodecki (Poland).

\appendix

\section{MUB in dimension four}\label{appendix1}
Let us consider the simplest case where our method can be applied. The construction of $m=2$ MUB having free parameters is reduced to find a family of CHM and, thus, here we consider $m=3$. In dimensions $N=2$ and $N=3$ complex Hadamard matrices are isolated and, consequently, any set of MUB in such dimensions is isolated. On the other hand, in dimension 4 there is a family of CHM (see Eq.(\ref{FamF4})). So, the first case corresponds to $N=4$.  A fixed triplet of MUB for $N=4$ is given by:
\begin{equation}
H_1=\left(
  \begin{array}{cccc}
1&0&0&0\\
0&1&0&0\\
0&0&1&0\\
0&0&0&1
  \end{array}
\right),
\end{equation}
\begin{equation}
H_2=\frac{1}{\sqrt{2}}\left(
  \begin{array}{cccc}
1&1&1&1\\
1&1&-1&-1\\
1&-1&i&-i\\
1&-1&-i&i
  \end{array}
\right),
\end{equation}
\begin{equation}
H_3=\frac{1}{\sqrt{2}}\left(
  \begin{array}{cccc}
1&1&1&1\\
1&1&-1&-1\\
-1&1&1&-1\\
1&-1&1&-1
  \end{array}
\right).
\end{equation}
The Gram matrix $G$ associated to this set is given by $1/2$ of the following matrix
\begin{equation}\label{G12}
\left(
  \begin{array}{cccccccccccc}
2& 0& 0& 0& 1& 1& 1& 1& 1& 1& 1& 1\\
0& 2& 0& 0& 1& 1& -1& -1& 1& 1& -1& -1\\
0& 0& 2& 0& 1& -1& i& -i& -1& 1& 1& -1\\
0& 0& 0& 2& 1&-1& -i& i& 1& -1& 1& -1\\
1& 1& 1& 1& 2& 0& 0& 0& 1& 1& 1& -1\\
1& 1& -1& -1& 0& 2& 0& 0& 1& 1& -1& 1\\
1& -1& -i& i& 0& 0& 2& 0& i& -i& 1& 1\\
1& -1& i& -i& 0& 0& 0& 2& -i& i& 1& 1\\
1& 1& -1& 1& 1& 1& -i& i& 2& 0& 0& 0\\
1& 1& 1& -1& 1& 1& i& -i& 0& 2& 0& 0\\
1& -1& 1& 1& 1& -1& 1& 1& 0& 0& 2& 0\\
1& -1& -1& -1& -1& 1& 1& 1& 0& 0& 0& 2
  \end{array}
\right).
\end{equation}
Here, we have 6 GER pairs of columns and rows given by \{1-2;3-4;5-6;7-8;9-10;11-12\}. 
Note that the perfect match between GER pairs of columns and rows is given to the fact that $G$ is hermitian. The introduction of free parameters into the GER pairs of columns $\{C_1,C_2\};\{C_3,C_4\}$ and rows $\{R_1,R_2\};\{R_3,R_4\}$ implies that $H_2^{\dag}H_3$ and $H_3^{\dag}H_2$ do not depend on the parameters (See Eq.(\ref{Gram2})). As consequence, these MUB are unitary equivalent for any value of the two parameters. Thereby, they are not interesting for us. In general, we do not introduce free parameters in the first $N$ columns (and rows) of Gram matrices of $m$ MUB in dimension $N$ as we already explained in Section \ref{SIII}. From considering the remaining 4 GER pairs above defined we easily generate the following MUB with free parameters:
\begin{equation}\label{H2}
H_2(\alpha,\beta)=\frac{1}{\sqrt{2}}\left(
  \begin{array}{cccc}
1&1&1&1\\
1&1&-1&-1\\
e^{i\alpha}&-e^{i\alpha}&ie^{i\beta}&-ie^{i\beta}\\
e^{i\alpha}&-e^{i\alpha}&-ie^{i\beta}&ie^{i\beta}
  \end{array}
\right)
\end{equation}
\begin{equation}\label{H3}
H_3(\gamma,\delta)=\frac{1}{\sqrt{2}}\left(
  \begin{array}{cccc}
1&1&1&1\\
1&1&-1&-1\\
-e^{i\gamma}&e^{i\gamma}&e^{i\delta}&-e^{i\delta}\\
e^{i\gamma}&-e^{i\gamma}&e^{i\delta}&-e^{i\delta}
  \end{array}
\right).
\end{equation}
Note that our method can be considerably simplified by introducing the free parameters in a reduced region of the Gram matrix (\ref{G12}). This is because $G$ contains much more information than the set of 3 MUB (see Eq. \ref{Choleski}). Precisely, we can restrict our attention to introduce free parameters in the first 4 rows according to the existing pairs of GER and Proposition \ref{mainprop}. The rest of the rows give us the explicit expression of the inner products between the elements of the different bases. In general, we can restrict our attention to introduce parameters in the first $N$ rows of $G$ when we consider $m$ MUB in dimension $N$. Note that this implies to only consider the GER pairs of the last $(m-1)N$ columns (without considering GER pairs of rows).

The four parameters $\alpha,\beta,\gamma,\delta$ appearing in Eq.(\ref{H2}) and Eq.(\ref{H3}) cannot be absorbed by global unitary operations. However, they are aligned, so one of them can be absorbed in a global phase of a vector of the canonical basis $H_1$. Therefore, we find the following 3-parametric continuos triplet of MUB in dimension $N=4$:
\begin{equation}
\{\mathbb{I},H_2(\alpha,0),H_3(\gamma,\delta)\}.
\end{equation}
This has been reported as the most general triplet of MUB that can be constructed in dimension $N=4$. Quadruplets and quintuplets of MUB do not allow free parameters in dimension 4 and, consequently, they are isolated \cite{BWB10}.

\section{MUB in dimension 8}\label{appendix2}
In this appendix we construct the maximal number of triplets, quadruplets and quintuplets of MUB having free parameters from  $\mathcal{S}_4$ (see Eq.(\ref{MUBAndrei})). The key result is provided in Table \ref{Tabla2} where we present the complete set of GER pairs for $\mathcal{S}_4$.
\begin{center}\label{Tabla2}
\begin{table*}
\scalebox{1}[1]{
\begin{tabular}{| c | c | c | c | c | c | c | c | c |}
	\hline
	& $H_{1}$ & $H_{2}$ & $H_{3}$ & $H_{4}$ & $H_{5}$ & $H_{6}$ & $H_{7}$ & $H_{8}$\\
	\hline
	 $H_{1}^{\dagger}$& - & 1-3-4-8;2-5-6-7 & 1-4-5-7;2-3-6-8& 1-2-4-7;3-5-6-8  &1-4-5-6;2-3-7-8&1-6-7-8;2-3-4-5  &1-2-3-7;4-5-6-8 & 1-2-5-8;3-4-6-7 \\
    \hline 
	 $H_{2}^{\dagger}$& 1-2-5-8;3-4-6-7 & - & 1-3-4-8;2-5-6-7 & 1-2-3-6;4-5-7-8&1-2-5-8;3-4-6-7&1-2-5-8;3-4-6-7&1-4-5-7;2-3-6-8
	 &1-2-4-6;3-5-7-8\\
   \hline
	 $H_{3}^{\dagger}$& 1-2-4-7;3-5-6-8 & 1-6-7-8;2-3-4-5 &-&1-4-5-6;2-3-7-8&1-6-7-8;2-3-4-5  &1-3-4-8;2-5-6-7&1-2-4-8;3-5-6-7&1-3-4-8;2-5-6-7\\
    \hline 
   	$H_{4}^{\dagger}$& 1-3-5-7;2-4-6-8 &1-4-5-7;2-3-6-8 &1-2-4-6;3-5-7-8& -  & 1-2-3-6;4-5-7-8  &1-4-5-6;2-3-7-8&1-3-5-8;2-4-6-7&1-4-5-7;2-3-6-8\\
    \hline 
$H_{5}^{\dagger}$& 1-2-3-6;4-5-7-8 & 1-2-4-6;3-5-7-8 &1-2-5-8;3-4-6-7&1-3-5-7;2-4-6-8& -& 1-3-5-7;2-4-6-8&1-6-7-8;2-3-4-5&1-6-7-8;2-3-4-5\\
    \hline 
$H_{6}^{\dagger}$& 1-3-4-8;2-5-6-7 &1-2-3-7;4-5-6-8 & 1-3-5-6;2-4-7-8&1-2-5-8;3-4-6-7& 1-3-4-8;2-5-6-7 & -&1-2-5-6;3-4-7-8&1-3-5-6;2-4-7-8\\
    \hline   			      
$H_{7}^{\dagger}$& 1-6-7-8;2-3-4-5 & 1-2-5-8;3-4-6-7 & 1-6-7-8;2-3-4-5 &1-6-7-8;2-3-4-5&1-2-4-7;3-5-6-8& 1-2-3-6;4-5-7-8 &-& 1-2-3-7;4-5-6-8 \\
\hline
$H_{8}^{\dagger}$& 1-4-5-6;2-3-7-8 & 1-3-5-6;2-4-7-8 & 1-2-3-7;4-5-6-8 &1-3-4-8;2-5-6-7& 1-3-5-7;2-4-6-8 & 1-2-4-7;3-5-6-8 &1-3-4-6;2-5-7-8& - \\
\hline
 \end{tabular}}
\caption{GER pairs required to construct any subset of $m$ MUB steeming from the maximal set of 9 MUB $\mathcal{S}_4.$}
\end{table*}
\end{center}
How to read Table \ref{Tabla2}:
\begin{itemize}
\item[\emph{(i)}] The cell associated to column $H_k$ and row $H_j^{\dag}$ contains all the GER pairs allowed by the triplet $\{\mathbb{I},H_j,H_k\}$.
\item[\emph{(ii)}] The notation $i$-$j$-$k$-$l$ means that every possible combination of 2 non-repeated indices determine a GER pairs; that is, $\{C_i,C_j\}$, $\{C_i,C_k\}$, $\{C_i,C_l\}$, $\{C_j,C_k\}$, $\{C_j,C_l\}$ and $\{C_k,C_l\}$ are GER pairs.
\item[\emph{(iii)}] The semicolon (;) separates complementary sets of GER pairs (i.e., for $\{i$-$j$-$k$-$l$;$\mu$-$\nu$-$\kappa$-$\eta\}$ mixtures of Graco-Latin indices \emph{do not} form GER pairs).
\end{itemize}
To construct quadruplets or quintuplets of MUB we have to find the intersection of the sets of GER allowed by all the subsets of triplets. If there is no intersection then free parameters cannot be introduced. Let us construct a triplet of MUB:

 Suppose we want to introduce free parameters in the triplet $\{H_1,H_2,H_3\}$ (see Eq.(\ref{Gram2})). In order to introduce free parameters in $H_1$ we have to find common GER pairs in the cells associated to $H_2^{\dag}H_1$ (i.e., column 2, row 3 of Table \ref{Tabla2}: \{1-2-5-8;3-4-6-7\}) and $H_3^{\dag}H_1$ (i.e., column 2, row 4: \{1-2-4-7;3-5-6-8\}). This is equivalent to find GER pairs  appearing in the Gram matrix of $\{H_1,H_2,H_3\}$. Thus, the unique set of common GER pairs are given by $\{C_1,C_2\}$, $\{C_3,C_6\}$, $\{C_4,C_7\}$ and $\{C_5,C_8\}$. Analogously, we can find the GER pairs for the second and third block of $G$; that is, $\{C_1,C_8\}$, $\{C_2,C_5\}$, $\{C_3,C_4\}$, $\{C_6,C_7\}$ and $\{C_1,C_4\}$, $\{C_2,C_6\}$, $\{C_3,C_8\}$, $\{C_5,C_7\}$, respectively. Thus, we are in conditions to introduce 12 free parameters in the Gram matrix of the fixed set $\{H_1,H_2,H_3\}$. As noted in Appendix \ref{appendix1}, the Cholesky decomposition  allows us to simplify the introduction of free parameters by only considering the first $N$ rows of G. Thus, our 12 parametric set of $m=3$ MUB is given by
\begin{widetext}
\begin{equation}
H_1(\alpha_1,\alpha_2,\alpha_3,\alpha_4)=\left(\begin{array}{cccccccc}
-i&-i&i&i&-i&i&i&-i\\
-i&-i&-i&i&i&-i&i&i\\
-ie^{i\alpha_1}&ie^{i\alpha_1}&ie^{i\alpha_2}&-ie^{i\alpha_3}&-ie^{i\alpha_4}&-ie^{i\alpha_2}&ie^{i\alpha_3}&ie^{i\alpha_4}\\
ie^{i\alpha_1}&-ie^{i\alpha_1}&ie^{i\alpha_2}&ie^{i\alpha_3}&-ie^{i\alpha_4}&-ie^{i\alpha_2}&-ie^{i\alpha_3}&ie^{i\alpha_4}\\
e^{i\alpha_1}&-e^{i\alpha_1}&-e^{i\alpha_2}&-e^{i\alpha_3}&-e^{i\alpha_4}&e^{i\alpha_2}&e^{i\alpha_3}&e^{i\alpha_4}\\
e^{i\alpha_1}&-e^{i\alpha_1}&e^{i\alpha_2}&-e^{i\alpha_3}&1&-e^{i\alpha_4}e^{i\alpha_2}&e^{i\alpha_3}&-e^{i\alpha_4}\\
-1&-1&1&-1&1&1&-1&1\\
1&1&1&1&1&1&1&1
\end{array}
\right),
\end{equation}
\begin{equation}
H_2(\beta_1,\beta_2,\beta_3,\beta_4)=\left(\begin{array}{cccccccc}
-1&1&-1&-1&1&1&1&-1\\
-ie^{i\beta_1}&-ie^{i\beta_2}&-ie^{i\beta_3}&ie^{i\beta_3}&ie^{i\beta_2}&ie^{i\beta_4}&-ie^{i\beta_4}&ie^{i\beta_1}\\
-i&i&i&i&i&-i&-i&-i\\
-e^{i\beta_1}&-e^{i\beta_2}&e^{i\beta_3}&-e^{i\beta_3}&e^{i\beta_2}&-e^{i\beta_4}&e^{i\beta_4}&e^{i\beta_1}\\
-e^{i\beta_1}&e^{i\beta_2}&e^{i\beta_3}&-e^{i\beta_3}&-e^{i\beta_2}&e^{i\beta_4}&-e^{i\beta_4}&e^{i\beta_1}\\
-i&-i&i&i&-i&i&i&-i\\
ie^{i\beta_1}&-ie^{i\beta_2}&ie^{i\beta_3}&-ie^{i\beta_3}&ie^{i\beta_2}&ie^{i\beta_4}&-ie^{i\beta_4}&-ie^{i\beta_1}\\
1&1&1&1&1&1&1&1
\end{array}
\right),
\end{equation}
and
\begin{equation}
H_3(\gamma_1,\gamma_2,\gamma_3,\gamma_4)=\left(\begin{array}{cccccccc}
ie^{i\gamma_1}&ie^{i\gamma_2}&ie^{i\gamma_3}&-ie^{i\gamma_1}&-ie^{i\gamma_4}&-ie^{i\gamma_2}&ie^{i\gamma_4}&-ie^{i\gamma_3}\\
-e^{i\gamma_1}&e^{i\gamma_2}&e^{i\gamma_3}&e^{i\gamma_1}&e^{i\gamma_4}&-e^{i\gamma_2}&-e^{i\gamma_4}&-e^{i\gamma_3}\\
-1&1&-1&-1&1&1&1&-1\\
-i&-i&i&-i&i&-i&i&i\\
-e^{i\gamma_1}&-e^{i\gamma_2}&e^{i\gamma_3}&e^{i\gamma_1}&-e^{i\gamma_4}&e^{i\gamma_2}&e^{i\gamma_4}&-e^{i\gamma_3}\\
-ie^{i\gamma_1}&ie^{i\gamma_2}&-ie^{i\gamma_3}&ie^{i\gamma_1}&-ie^{i\gamma_4}&-ie^{i\gamma_2}&ie^{i\gamma_4}&ie^{i\gamma_3}\\
-i&i&i&-i&-i&i&-i&i\\
1&1&1&1&1&1&1&1
\end{array}
\right).
\end{equation}
\end{widetext}
Here, only 8 parameters are relevant because 4 of them do not appear in the inner products, and thus they can be absorbed by a global rotation (i.e., or $\alpha$'s or $\beta$'s or $\gamma$'s can be considered as zero without loosing of generality). Moreover, given that the parameters are aligned we have 7 independent parameters. This triplet can be straightforwardly extended to a quadruplet of MUB by adding the computational basis $H_9$. In order to construct a quintuplet we have to find a suitable extra basis. One way to do this is by considering $H_5$. For this choice we have a non-empty set of GER and 4 parameters can be introduced in $H_1$ (see Table \ref{Tabla2}, column starting with $H_1$). The remaining four bases $\mathbb{I},H_2,H_3$ and $H_5$ are fixed, where $H_2=H_2(0,0,0,0)$, $H_3=H_3(0,0,0,0)$ and $H_5$ is given by
\begin{equation}
H_5=\left(\begin{array}{cccccccc}
1&-1&1&1&-1&-1&-1&1\\
1&-1&-1&1&1&1&-1&-1\\
-i&-i&-i&i&i&-i&i&i\\
-i&-i&i&i&-i&i&i&-i\\
i&-i&-i&-i&-i&i&i&i\\
-i&i&-i&i&-i&i&-i&i\\
-1&-1&1&-1&1&1&-1&1\\
1&1&1&1&1&1&1&1\\\end{array}
\right).
\end{equation}
We encourage to the reader to verify that there is no intersection between the set o GER pairs for $H_2$, $H_3$ and $H_5$ (see Table \ref{Tabla2}) and, consequently, no more free parameters can be introduced. Therefore, the 4 parametric quintuplet is given by
\begin{equation}
\{\mathbb{I},H_1(\alpha_1,\alpha_2,\alpha_3,\alpha_4),H_2,H_3,H_5\}.
\end{equation}

\begin{center}
\begin{table}[h]
\scalebox{1}[1]{
\begin{tabular}{c|c|c|c|c|c|c}
Circ. A&Circ. B&Circ. C&Circ. D&Circ. E&Circ. F&Circ. G\\ \hline
12{\color{red}\textbf{4}}6&12{\color{red}\textbf{3}}4&123{\color{red}\textbf{6}}&134{\color{red}\textbf{5}}&1{\color{red}\textbf{2}}45&1{\color{red}\textbf{2}}37&{\color{red}\textbf{1}}235\\
125{\color{red}\textbf{7}}&127{\color{red}\textbf{8}}&12{\color{red}\textbf{5}}8&136{\color{red}\textbf{7}}&{\color{red}\textbf{1}}267&{\color{red}\textbf{1}}248&1{\color{red}\textbf{2}}68\\
{\color{red}\textbf{1}}347&{\color{red}\textbf{1}}368&13{\color{red}\textbf{4}}8&146{\color{red}\textbf{8}}&14{\color{red}\textbf{7}}8&135{\color{red}\textbf{8}}&1{\color{red}\textbf{3}}78\\
1{\color{red}\textbf{3}}56&14{\color{red}\textbf{6}}7&{\color{red}\textbf{1}}456&{\color{red}\textbf{1}}578&15{\color{red}\textbf{6}}8&1{\color{red}\textbf{4}}57&1{\color{red}\textbf{5}}67\\
{\color{red}\textbf{2}}478&{\color{red}\textbf{2}}358&23{\color{red}\textbf{7}}8&{\color{red}\textbf{2}}346&23{\color{red}\textbf{4}}7&2{\color{red}\textbf{3}}68&234{\color{red}\textbf{8}}\\
256{\color{red}\textbf{8}}&24{\color{red}\textbf{5}}7&{\color{red}\textbf{2}}567&2{\color{red}\textbf{3}}57&23{\color{red}\textbf{5}}6&246{\color{red}\textbf{7}}&245{\color{red}\textbf{6}}\\
34{\color{red}\textbf{6}}8&3{\color{red}\textbf{4}}56&{\color{red}\textbf{3}}467&2{\color{red}\textbf{4}}58&{\color{red}\textbf{3}}458&35{\color{red}\textbf{6}}7&345{\color{red}\textbf{7}}\\
3{\color{red}\textbf{5}}78&56{\color{red}\textbf{7}}8&457{\color{red}\textbf{8}}&2{\color{red}\textbf{6}}78&367{\color{red}\textbf{8}}&3{\color{red}\textbf{5}}68&{\color{red}\textbf{4}}678
 \end{tabular}}
\label{Tabla3}
\caption{Quantum circuit required to construct every quadruplet and quintuplet of MUB steeming from $\mathcal{S}_4$. The 4 numbers $ijkl$ denote the quintuplet $\{\mathbb{I},H_i,H_j,H_k,H_l\}$ whereas the red-bold number (online version) denotes which basis carries the parameters. Quadruplets are constructed by removing any basis from quintuplets.}
\end{table}
\end{center}

\section{Quantum circuits for quadruplets and quintuplets of MUB}\label{appendix3}
Every quadruplet and quintuplet of MUB with free parameters in dimension 8 shown in this work was generated from considering subsets of $\mathcal{S}_4$ (see Eq.(\ref{MUBAndrei})). Such sets  can be implemented in the laboratory by considering suitable quantum circuits. First, we have to consider the generation of the fixed bases $\mathcal{S}_4$ and then the introduction of the parameters. Therefore, the full quantum circuit is the composition of two circuits. The generation of the set $\mathcal{S}_4$ is given by the quantum circuit depicted in Figure \ref{CCLuis}. This circuit was recently derived \cite{SSL14}. The free parameters can be introduced by considering 7 quantum circuits (A to G). Table \ref{Tabla3} shows the quantum circuit required to generate every quadruplet and quintuplet of MUB. Here, every set of four numbers $ijkl$ denote the 4 CHM $H_i, H_j, H_k$ and $H_l$. The fifth basis is $H_9=\mathbb{I}$ which is implicit in the table. Thus, the quintuplet associated to $ijkl$ is given by $\{\mathbb{I},H_i,H_j,H_k,H_l\}$. From removing any basis we get a quadruplet having free parameters. The 7 quantum circuits are given by
\begin{figure}
\includegraphics[width=8cm]{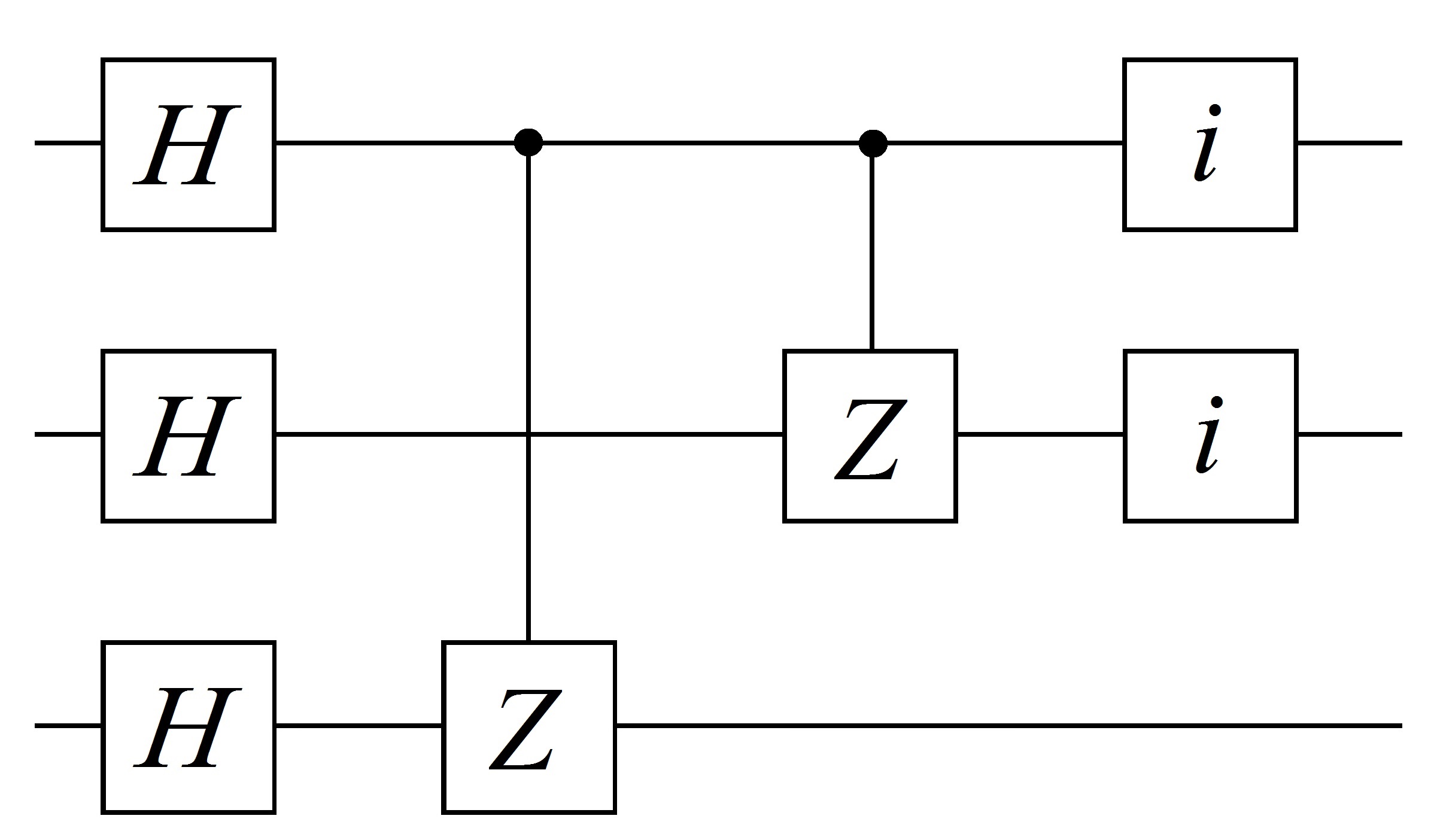}
\caption{Quantum circuit required to construct the fixed set of 9 MUB $\mathcal{S}_4$ \cite{SSL14}.}
\label{CCLuis}
\end{figure}
\begin{center}
QUANTUM CIRCUIT A:\vspace{0.5cm}

\includegraphics[width=6cm]{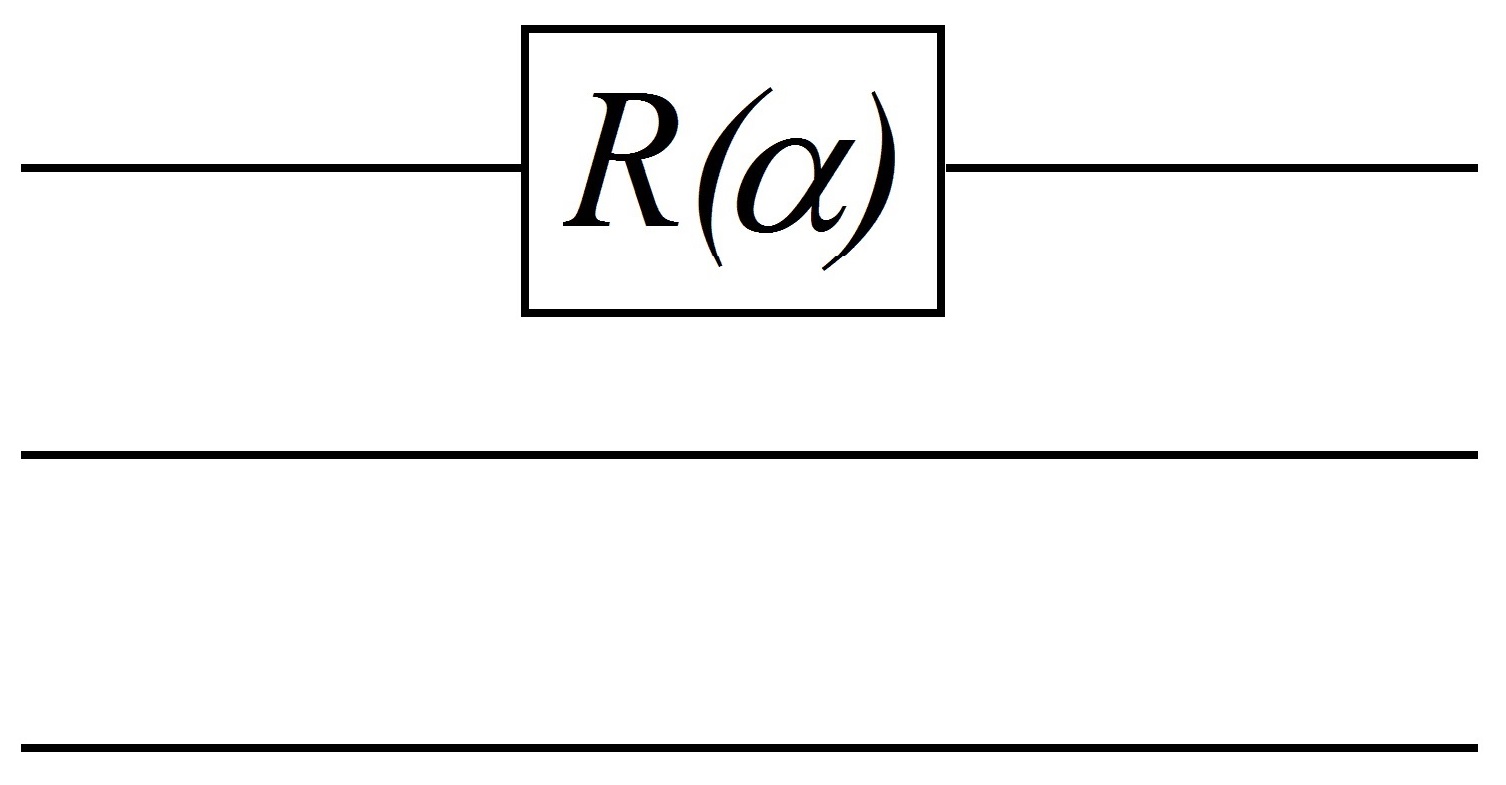}
\end{center}
\begin{center}\vspace{0.5cm}

QUANTUM CIRCUIT B:\vspace{0.5cm}

\includegraphics[width=6cm]{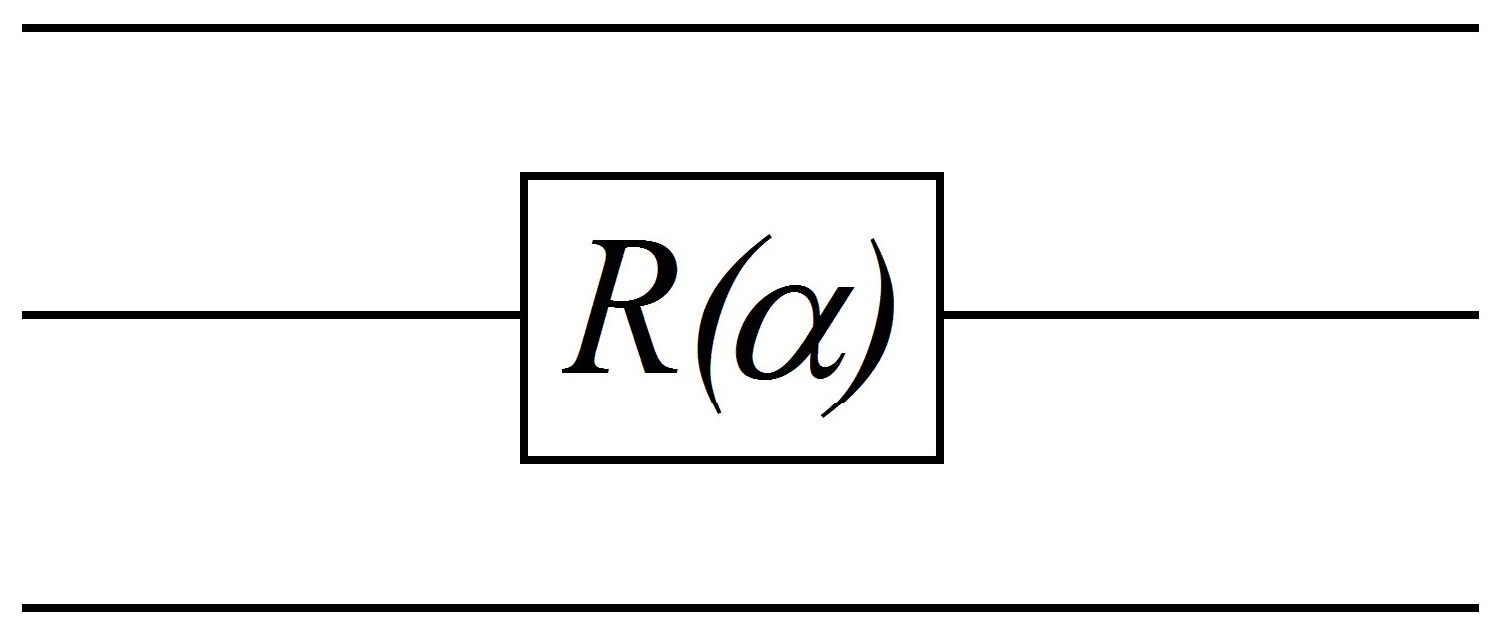}
\end{center}
\begin{center}\vspace{0.5cm}

QUANTUM CIRCUIT C:\vspace{0.5cm}

\includegraphics[width=6cm]{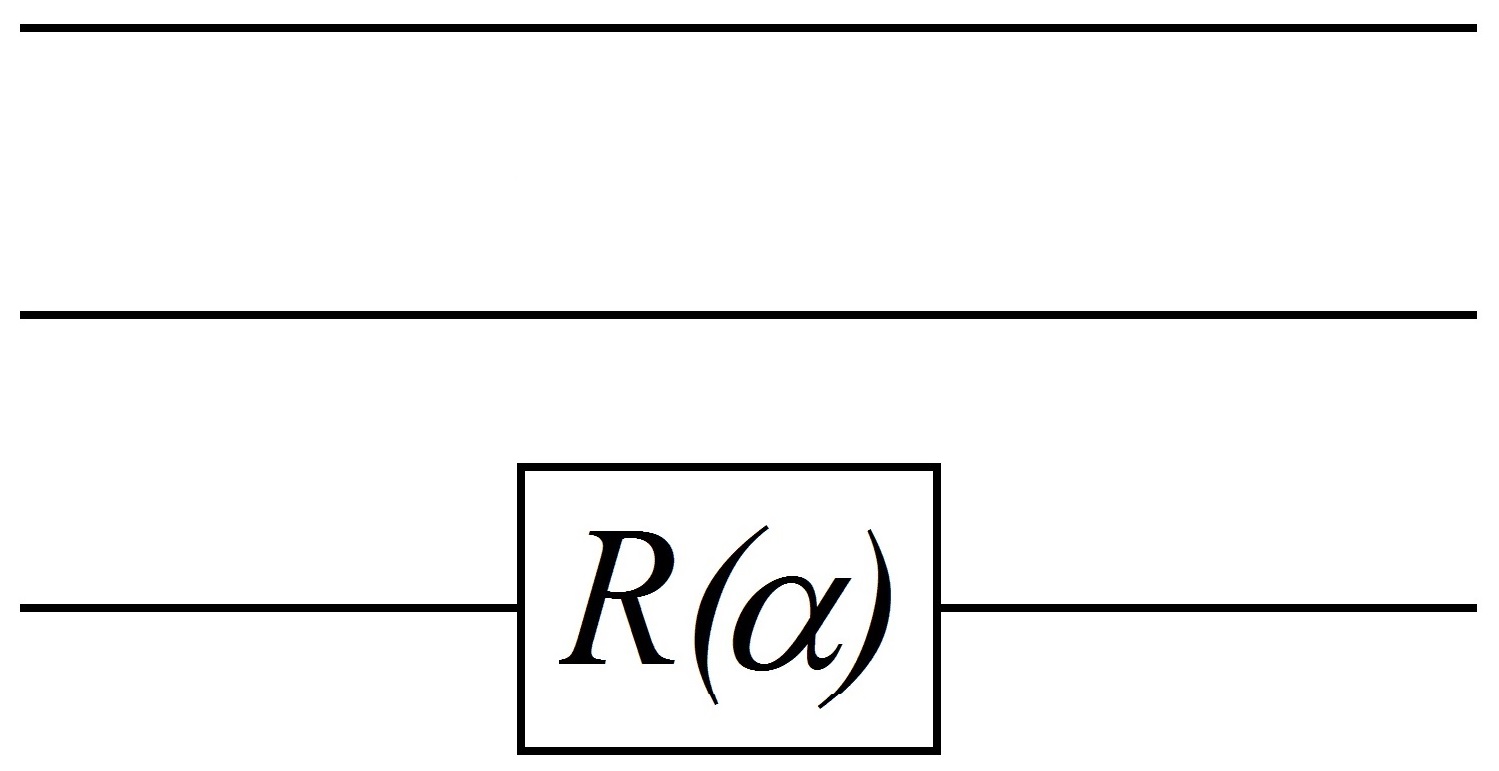}
\end{center}
\begin{center}\vspace{0.5cm}

QUANTUM CIRCUIT D:\vspace{0.5cm}

\includegraphics[width=6cm]{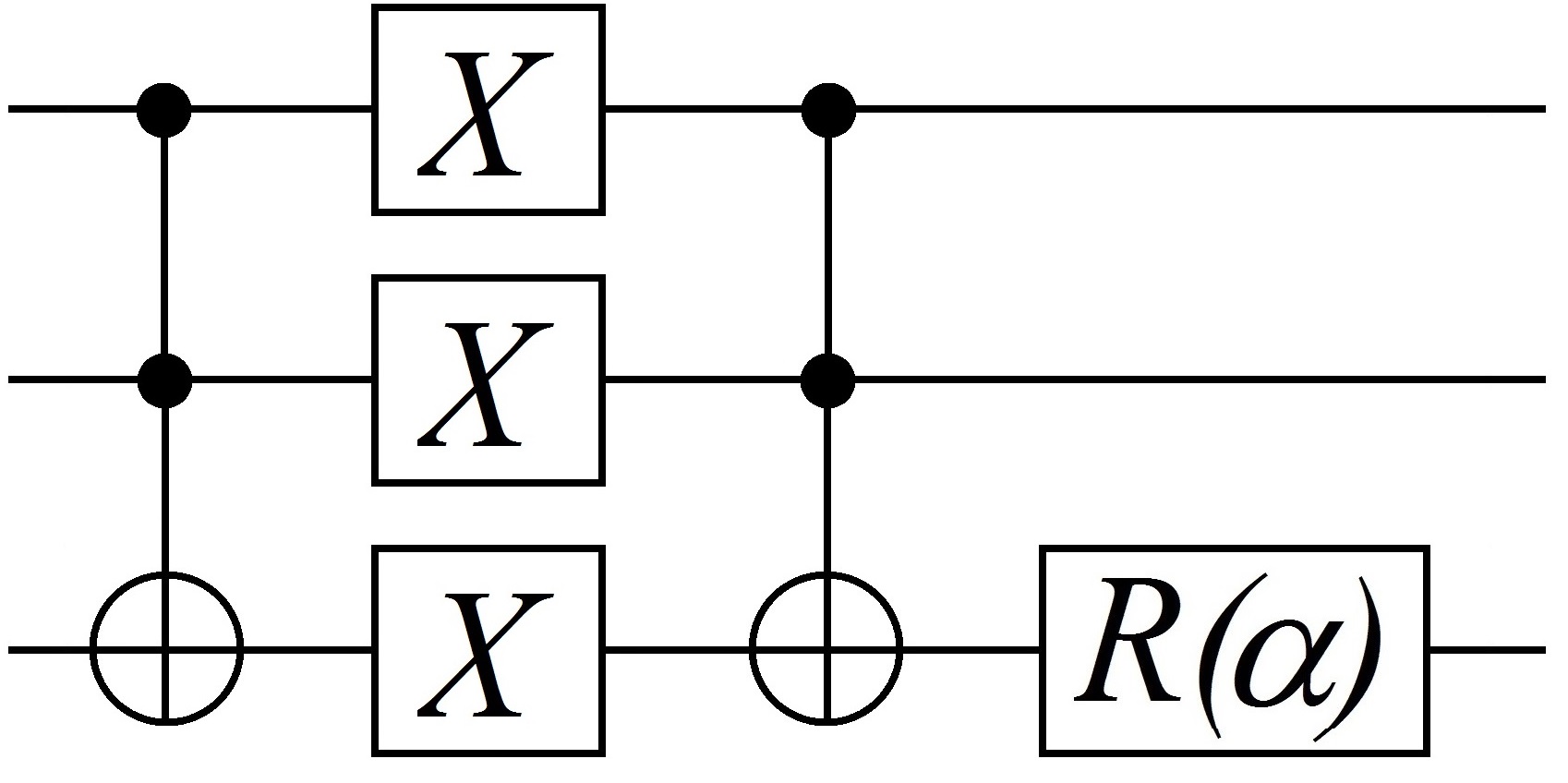}
\end{center}
\begin{center}\vspace{0.5cm}

QUANTUM CIRCUIT E:\vspace{0.5cm}

\includegraphics[width=6cm]{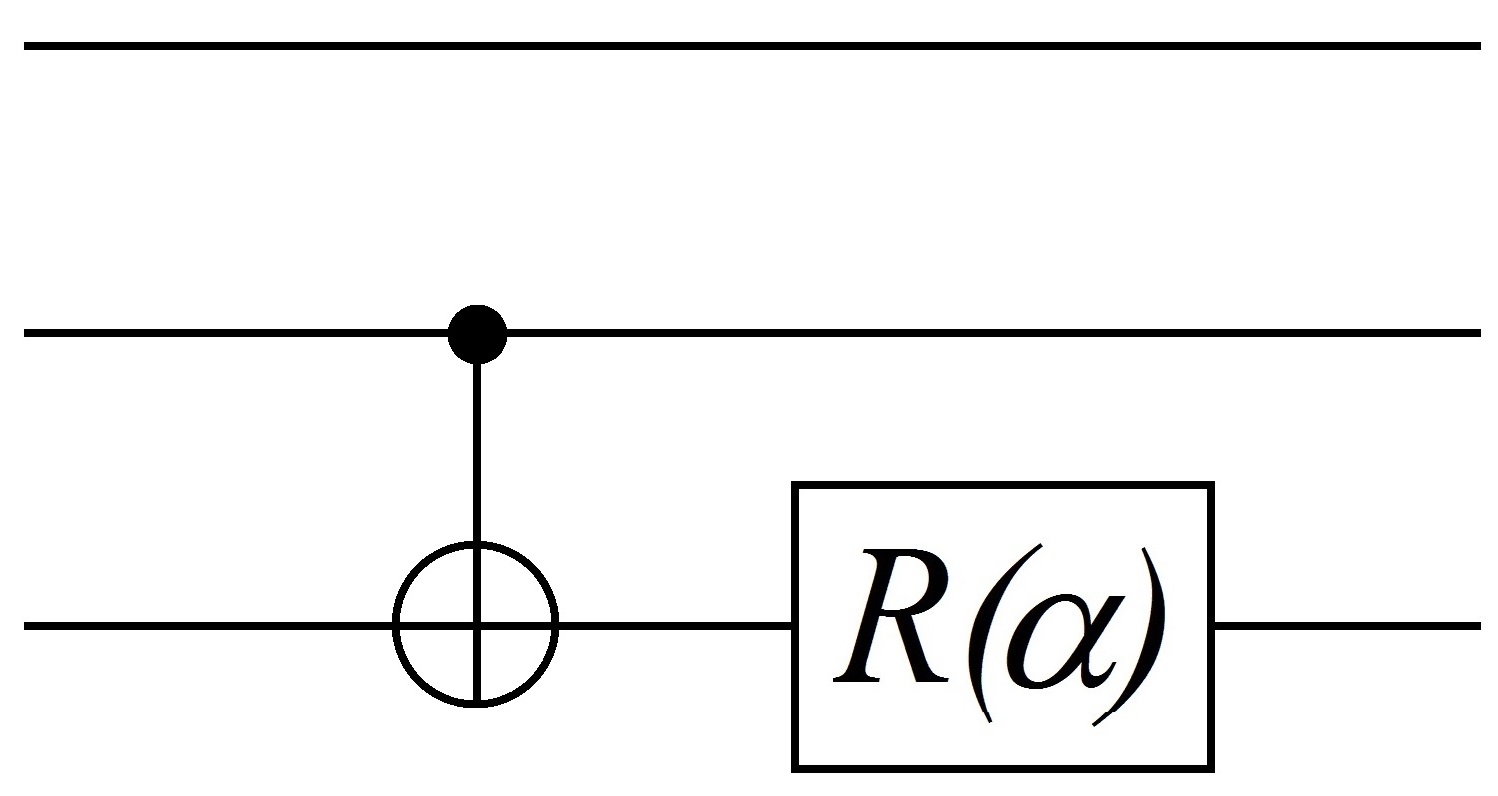}
\end{center}
\begin{center}\vspace{0.5cm}

QUANTUM CIRCUIT F:\vspace{0.5cm}

\includegraphics[width=6cm]{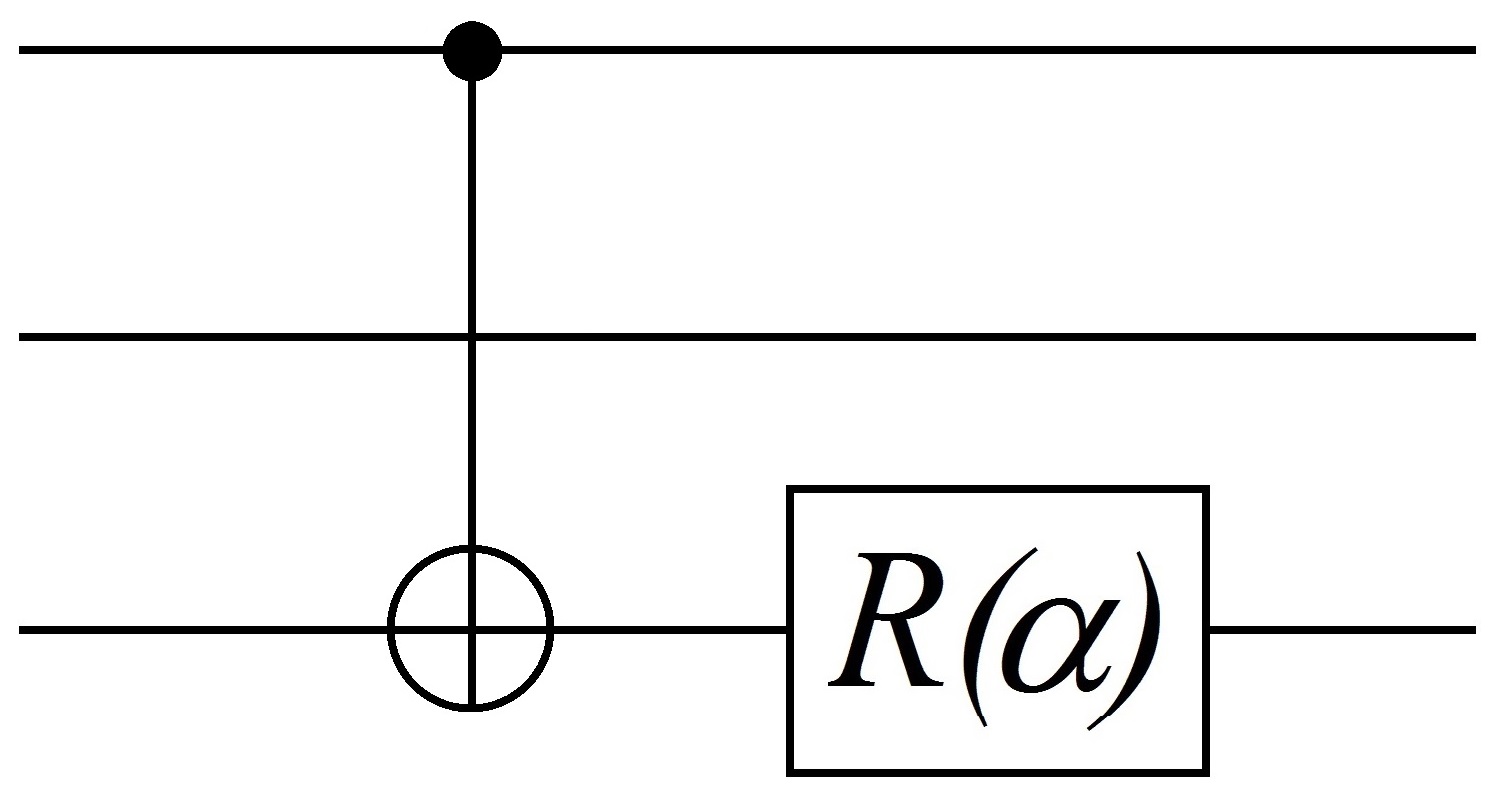}
\end{center}
\begin{center}\vspace{0.5cm}

QUANTUM CIRCUIT G:\vspace{0.5cm}

\includegraphics[width=6cm]{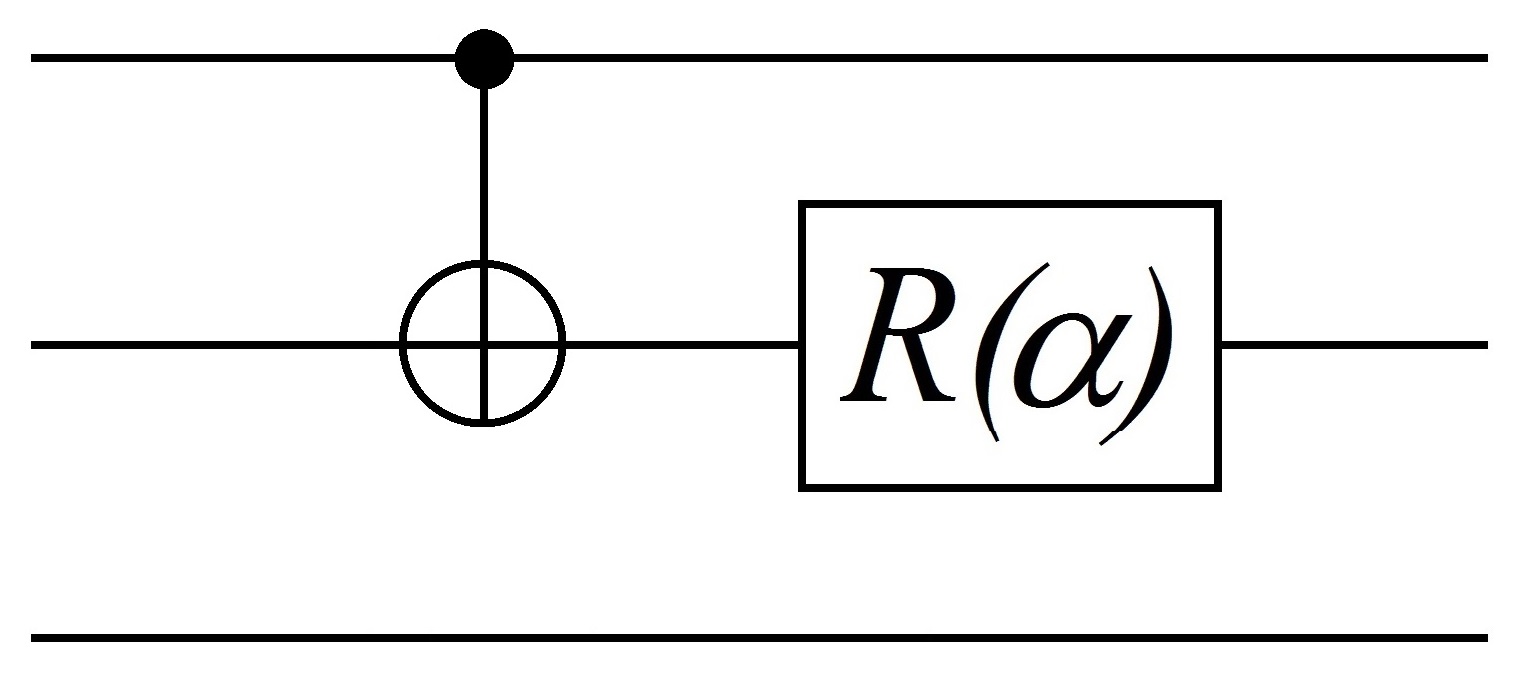}
\end{center}
where
\begin{equation}
R(\alpha)=\left(\begin{array}{cc}
1&0\\
0&e^{i\alpha}
\end{array}
\right),
\end{equation}
and $\alpha\in[0,2\pi)$.
Every quantum circuit (from A to G) introduces aligned free parameters in one of the basis of every quintuplet. The selection of the quadruplet or quintuplet has to be according Table \ref{Tabla2}, as we already explained and exemplified in Appendix \ref{appendix2}. The aligned free parameters appear according the following table:
\begin{equation}\label{circuitrows}
\begin{tabular}{c|ccccccc}
Circuit&A&B&C&D&E&F&G\\ \hline
Rows&1234&1256&1357&1467&2367&2457&3456
 \end{tabular}
\end{equation}
Here, every set of 4 numbers determine the rows where the free parameters are introduced in the bases. For example, the quintuplet $\{\mathbb{I},H_1,H_2,H_3,H_5\}$ provided in Appendix \ref{appendix2} allows the introduction of parameters in $H_1$. The fixed set of 5 bases is generated by the quantum circuit given in Fig.\ref{CCLuis}. According Table \ref{Tabla3} the free parameters can be introduced in $H_1$ by considering the quantum circuit $G$. The aligned parameters in $H_1$ appear in the rows 3,4,5 and 6 (according Table \ref{circuitrows}). We encourage readers to verify these properties from the explicit expressions of the quintuplet $\{\mathbb{I},H_1,H_2,H_3,H_5\}$ provided in Appendix \ref{appendix2}.

\section{Maximal number of MUB: a simple proof for the upper bound}\label{appendix4}
Here, we present an independent proof for the upper bound of the maximal number of MUB in real and complex Hilbert spaces.
\begin{prop}
In dimension $N$ there are $m_R\leq N/2+1$ and $m_C\leq N+1$ MUB for real and complex Hilbert spaces, respectively.
\end{prop}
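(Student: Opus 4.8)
The plan is to read both bounds off a single rank inequality applied to the Gram matrix $G$ of Eq.(\ref{Gram}). Only two facts about $G$ are needed. First, $G$ is the Gram matrix of $mN$ unit vectors lying in an $N$-dimensional space, so $\mathrm{rank}(G)\le N$. Second, by the unbiasedness condition (\ref{MUB}) the entrywise modulus-squared matrix $M:=G\circ G^{*}$, whose $(i,j)$ entry is $|G_{ij}|^{2}$, has a transparent block structure: each of the $m$ diagonal blocks equals $\mathbb{I}_{N}$ and every off-diagonal block equals $\tfrac1N J_{N}$, where $J_{N}$ is the all-ones matrix. Equivalently,
\begin{equation}
M=\mathbb{I}_{mN}+\tfrac1N\,(J_{m}-\mathbb{I}_{m})\otimes J_{N}.
\end{equation}

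Next I would diagonalize $M$ from this tensor form. Since $J_{m}$ has eigenvalue $m$ (simple) and $0$ (multiplicity $m-1$), and likewise $J_{N}$ has eigenvalue $N$ (simple) and $0$ (multiplicity $N-1$), the four Kronecker sectors give: eigenvalue $m$ with multiplicity $1$; eigenvalue $0$ with multiplicity $m-1$ (the sector where $J_{m}$ acts as $0$ and $J_{N}$ as $N$); and eigenvalue $1$ on the remaining $m(N-1)$-dimensional subspace. Hence $\mathrm{rank}(M)=m(N-1)+1$.

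For complex Hilbert spaces I would then invoke the Schur product rank inequality $\mathrm{rank}(A\circ B)\le\mathrm{rank}(A)\,\mathrm{rank}(B)$ with $A=G$ and $B=G^{*}$, so that $m(N-1)+1=\mathrm{rank}(M)\le\mathrm{rank}(G)^{2}\le N^{2}=(N-1)(N+1)+1$, giving $m_{C}\le N+1$. For real Hilbert spaces $G$ is real symmetric of rank $\le N$; writing a spectral decomposition $G=\sum_{i=1}^{N}\lambda_{i}v_{i}v_{i}^{T}$ one checks that $(v_{i}v_{i}^{T})\circ(v_{j}v_{j}^{T})=(v_{i}\circ v_{j})(v_{i}\circ v_{j})^{T}$, so every column of $M=G\circ G$ lies in $\mathrm{span}\{v_{i}\circ v_{j}:1\le i\le j\le N\}$, a subspace of dimension at most $\binom{N+1}{2}=N(N+1)/2$. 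Therefore $m(N-1)+1\le N(N+1)/2=(N-1)(N+2)/2+1$, which yields $m_{R}\le N/2+1$.

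The spectral bookkeeping for $M$ and the elementary derivations of the two rank estimates are routine. The only genuinely load-bearing ingredient is the pair of Hadamard-product rank bounds — the general $\mathrm{rank}(A\circ B)\le\mathrm{rank}(A)\,\mathrm{rank}(B)$ and its symmetric sharpening $\mathrm{rank}(A\circ A)\le\binom{\mathrm{rank}(A)+1}{2}$ — and the main point to get right is to apply them to $G\circ G^{*}$ rather than to $G$ itself, while verifying that the MUB conditions really do force the stated block form of $M$.
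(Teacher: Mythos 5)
Your proposal is correct and follows essentially the same route as the paper: the same block identity $G\circ G^{*}=\mathbb{I}_{mN}+\tfrac1N(J_m-\mathbb{I}_m)\otimes J_N$ with rank $m(N-1)+1$, combined with $\mathrm{rank}(A\circ B)\le\mathrm{rank}(A)\,\mathrm{rank}(B)$ for the complex bound and the symmetric sharpening for the real bound. The only additions are welcome ones: you actually prove the inequality $\mathrm{rank}(A\circ A)\le\binom{\mathrm{rank}(A)+1}{2}$ via the spectral decomposition (the paper merely cites it as a private communication), and your notation $G\circ G^{*}$ (entrywise conjugate) is the correct reading of the paper's slightly loose $G\circ G^{\dag}$, since for Hermitian $G$ the latter would literally give entries $G_{ij}^2$ rather than $|G_{ij}|^2$.
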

\begin{proof}
Let $G$ be the Gram matrix of $m$ complex MUB in dimension $N$. Then,
\begin{equation}\label{GG}
G\circ G^{\dag}=\frac{1}{N}\mathbb{J}+\mathbb{I},
\end{equation}
where the $mN\times mN$ matrix $\mathbb{J}$ has $m$ diagonal blocks of size $N$ consisting by the null matrix and the non-diagonal blocks of size $N$ are equal to the unit matrix $\mathbf{1}$ (i.e., every entry of $\mathbf{1}$ is 1). The matrix $\mathbb{I}$ of Eq.(\ref{GG}) is the identity matrix of size $mN$. From matrix theory it is known that given $A,B\geq0$ we have $\mathrm{Rank}(A\circ B)\leq\mathrm{Rank}(A)\mathrm{Rank}(B)$. From considering $A=B^{\dag}=G(n,d)$, the above inequality and Eq.(\ref{GG}) we have $mN-(m-1)\leq N^2$ or, equivalently, $m\leq N+1$. For the real case, we have the tighter inequality $\mathrm{Rank}(A\circ A)\leq\frac{1}{2}\mathrm{Rank}(A)[\mathrm{Rank}(A)+1]$ \footnote{J. Tropp, private communication.}. From combining this inequality with Eq.(\ref{GG}) we have $m_R\leq N/2+1$.
\end{proof}
This proof was inspired in the derivation of the upper bound of the maximal number of vectors in Equiangular Tight Frames \cite{STDH07}.

\end{document}